\def\BibTeX{{\rm B\kern-.05em{\sc i\kern-.025em b}\kern-.08emT\kern-.1667em\lower.7ex\hbox{E}\kern-.125emX}}
\newcommand{\ignore}[1]{}
\newcommand{\xmark}{\text{\ding{55}}}
\definecolor{darkred}{rgb}{0.5, 0, 0}
\definecolor{darkgreen}{rgb}{0, 0.5, 0}
\definecolor{darkblue}{rgb}{0,0,0.5}
\newlength{\saveparindent}
\newlength{\saveparskip}
\newcounter{ctr}
\newcounter{ectr}
\newcommand\ZZ{\ensuremath{\mathbb Z}}
\newcommand\DDD{\ensuremath{\mathcal D}}
\newcommand\EEE{\ensuremath{\mathcal E}}
\newcommand\KKK{\ensuremath{\mathcal K}}
\newcommand\MMM{\ensuremath{\mathcal M}}
\newcommand\XXX{\ensuremath{\mathcal X}}
\newcommand{\etal}{\mbox{\emph{et al.\ }}}
\def\eg{{\it e.g. }}
\newcommand{\ie}{\mbox{\emph{i.e.\ }}}
\newlength{\protowidth}
\newcommand{\namedref}[2]{\hyperref[#2]{#1~\ref*{#2}}}
\newcommand{\subfigureref}[2]{\hyperref[#1]{Figure~\ref*{#1}#2}}
\begin{document}

%
\title{Robust and Attack Resilient Logic Locking with a High Application-Level Impact}

%

\author{Yuntao Liu, Michael Zuzak, Yang Xie, Abhishek Chakraborty, Ankur Srivastava}
\affiliation{%
  \institution{\\ University of Maryland, College Park}
}



 




%

\renewcommand{\shortauthors}{Yuntao Liu et. al.}

%
\begin{abstract}
Logic locking is a hardware security technique aimed at protecting intellectual property (IP) against security threats in the IC supply chain, especially those posed by untrusted fabrication facilities. Such techniques incorporate additional locking circuitry within an IC that induces incorrect digital functionality when an incorrect verification key is provided by a user. The amount of error induced by an incorrect key is known as the \textbf{effectiveness} of the locking technique.
A family of attacks known as "SAT attacks" provide a strong mathematical formulation to find the correct key of locked circuits. In order to achieve high \textbf{SAT resilience} (\ie complexity of SAT attacks), many conventional logic locking schemes fail to inject sufficient error into the circuit when the key is incorrect. For example, in the case of SARLock and Anti-SAT, there are usually very few (or only one) input minterms that cause any error at the circuit output. 
The state-of-the-art {\emph stripped functionality logic locking (SFLL)} technique provides a wide spectrum of configurations which introduced a trade-off between {\bf SAT resilience} and {\bf effectiveness}.
In this work, we prove that such a trade-off is universal among all logic locking techniques.
In order to attain high effectiveness of locking without compromising SAT resilience, we propose a novel logic locking scheme, called Strong Anti-SAT (SAS).
In addition to SAT attacks, removal-based attacks are another popular kind of attack formulation against logic locking where the attacker tries to identify and remove the locking structure. Based on SAS, we also propose Robust SAS (RSAS) which is resilient to removal attacks and maintains the same \textit{SAT resilience} and \textit{effectiveness} as SAS. 
SAS and RSAS have the following significant improvements over existing techniques. 
(1) We prove that the {\it SAT resilience} of SAS and RSAS against SAT attack is not compromised by increase in {\it effectiveness}.
(2) In contrast to prior work which focused solely on the circuit-level locking impact, we integrate SAS-locked modules into an 80386 processor and show that SAS has a high application-level impact.
(3) Our experiments show that SAS and RSAS exhibit better SAT resilience than SFLL and their effectiveness is similar to SFLL. 
\end{abstract}

%
%
\begin{CCSXML}
<ccs2012>
<concept>
<concept_id>10002978.10003001</concept_id>
<concept_desc>Security and privacy~Security in hardware</concept_desc>
<concept_significance>500</concept_significance>
</concept>
</ccs2012>
\end{CCSXML}

\ccsdesc[500]{Security and privacy~Security in hardware}

%
\keywords{logic locking, SAT attack, machine learning}

%

%
\maketitle

\section{Introduction}

Due to the increasing cost of maintaining IC foundries with advanced technology nodes, many chip designers have become fabless and outsource their fabrication to off-shore foundries.
However, such foundries are not under the designer's control which puts the security of the IC supply chain at risk.
{Untrusted foundries are capable of various malicious activities, among which ``overbuilding'' is the most concerning since the foundry can simply produce more copies of the IC and sell the extra copies for profit.}
Many design-for-trust techniques have been studied as countermeasures among which logic locking has been the most widely studied \cite{chakraborty2019keynote}.
A logic locked circuit requires a secret key input and the correct key is kept by the designer and not known to the foundry. The functionality of the circuit is correct only if the key is correct.
After the foundry manufactures the locked circuit and returns it to the designer, the correct key is applied to the circuit by connecting a tamper-proof memory containing the key to the key inputs. This process is called \emph{activation}.
Over the years, different types of logic locking mechanisms have been suggested. 
Initially, locking involved inserting XOR/XNOR gates in a synthesized design netlist~\cite{roy2008epic}.
Later, techniques based on VLSI testing principles have been outlined to improve logic locking schemes by manifesting high corruption at the output bits when an incorrect key is applied \cite{rajendran2012security,rajendran2015fault}.
{It is noteworthy that logic locking has evolved from an academic proposal to a practical solution. Sisejkovic \etal proposed Inter-Lock, a cross-core logic locking technique \cite{vsivsejkovic2019inter}, based on which a scalable logic locking solution for multi-module hardware designs was implemented it on a RISC-V processor \cite{vsivsejkovic2020scaling, vsivsejkovic2020secure}, which has already been made commercially available.}

The Boolean satisfiability-based attack, a.k.a. SAT attack \cite{subramanyan2015evaluating} was a game changer in the logic locking field. SAT provides a strong mathematical formulation to find the correct locking key of a logic locked IC which prunes out wrong keys in an iterative manner. In each iteration, an input (called the Distinguishing Input, or DI) is chosen by the SAT solver and all the wrong keys that corrupt the output of this DI are pruned out. All wrong keys are pruned out when no more DI can be found.
Point function (PF)-based logic locking, including SARLock~\cite{yasin2016sarlock} and Anti-SAT~\cite{xie2018anti}, force the number of SAT iterations to be exponential in the key size by pruning out only a very small number of wrong keys in each iteration.
However, PF-based locking schemes have a drawback that there are very few (or only one) input minterms whose output is incorrect for each wrong key. Hence the overall error rate of the locked circuit with a wrong key is very small. This disadvantage is captured by approximate SAT attacks such as AppSAT~\cite{shamsi2017appsat} and Double-DIP~\cite{shen2017double}. These attack schemes are able to find an \textit{approximate key (approx-key)} which makes the locked circuit behave correctly for most (but not all) of the input values.
Another kind of popular attack against logic locking schemes is removal attacks \cite{yasin2017removal}. In a removal attack, the attacker tries to find the logic locking module, remove it, and replace its output with a constant 0 or 1. The key step in this attack is to identify the output wire of the locking module. This can be achieved by structural analysis assisted by calculating the signal probability skew (SPS) of each wire \cite{yasin2017removal}. Locking techniques such as Anti-SAT \cite{xie2018anti} is most vulnerable to this type of attack since the correct functionality of the original circuit can be obtained by removing the Anti-SAT module and replacing its output with 0.
{Other types of attacks on logic locking have also been proposed, such as Hamming distance guided hill climbing attack \cite{plaza2015solving}, reduced-order binary decision diagram (ROBDD)-based attack \cite{massad2017logic}, machine learning based structural attacks ``SAIL'' \cite{chakraborty2018sail} and ``SWEEP'' \cite{alaql2019sweep}, and combined structural and functional attack ``SURF'' \cite{chakraborty2019surf}. These attacks are effective in recovering \textit{most} of the key bits correctly. However, these attacks mainly targeted pre-SAT logic locking schemes and do not guarantee to find a correct key. Hence, in this paper, we focus on SAT and removal based attacks.}

More recently, Yasin \etal proposed {\it stripped functionality logic locking (SFLL)} which allows the designer to select a set of protected input patterns that are affected by a large percentage of wrong keys while other input patterns are affected by very few wrong keys~\cite{yasin2017provably}. SFLL is not vulnerable to removal attack since the functionality of the original circuit for the protected input patterns has been modified in SFLL.
However, when the number of protected patterns increases, SAT attacks need significantly fewer iterations to find the correct key. 
{More details of SFLL and attack methods targeting SFLL are introduced in Section \ref{ssec:anti-sat}.}
Essentially, SFLL creates a fundamental trade-off between {\bf SAT resilience} (\ie SAT attack complexity) and {\bf effectiveness} (\ie the amount of error injected by a wrong key).
This trade-off is problematic. 
On the one hand, if only very few input patterns are protected, a wrong key may not inject enough error into the circuit and useful work may still be done using the chip, rendering locking {\bf ineffective}.
On the other hand, having more protected input patterns will compromise the circuit's {\bf SAT resilience}.
Moreover. as we move into the machine learning (ML) era, error-resilient applications are becoming increasingly relevant since most ML-based applications usually embody substantial amount of error resilience. Hence small amount of error in the hardware (introduced by incorrect keys and/or hardware simplification) may not necessarily impact the overall application accuracy. 
With SFLL, if we want to ensure a very high corruption at the hardware level (for wrong keys), the resiliency to SAT would inevitably reduce. Addressing this dilemma is the main theme of our paper.

We propose \emph{Strong Anti-SAT (SAS)} to address the challenges in achieving high effectiveness without sacrificing SAT resilience.
On one hand, SAS ensures that, given any wrong (including approximate) key, the error injected by locking circuitry will have significant application-level impact. On the other hand, SAS is provably resilient to SAT attacks (\ie requiring exponential time). Based on SAS, we also propose Robust SAS (RSAS), a variant of SAS that is not vulnerable to removal attacks and has the same \textit{SAT resilience} and \textit{effectiveness} as SAS. This makes RSAS a substantial improvement over the limitations posed by SAS. The contribution of this work is as follows.

\begin{enumerate}
    \item We prove the fundamental trade-off between {\it SAT resilience} and {\it effectiveness} which is applicable to any logic locking scheme.
    
    \item We demonstrate the inability of existing locking techniques to secure hardware running real-world workloads due to such a trade-off. We show that, when the longest combinational path (\ie the multiplier) in a 32-bit 80386 processor is locked using SFLL, the processor fails to simultaneously have high SAT complexity and high application-level impact on both PARSEC \cite{bienia2008parsec} and ML-based application benchmarks.

    \item We propose \emph{Strong Anti-SAT (SAS)} to address this challenge. In SAS, a set of input minterms that have higher impact on the applications are identified as \textit{critical minterms}.
    We design the locking infrastructure of SAS such that given a wrong key, the critical minterms are more likely to introduce error in the circuit and hence result in an application-level error. We also prove that the SAT complexity is exponential in the number of key bits and {does not deteriorate when the number of critical minterms increases}. This is a substantial improvement over SFLL.
    
    \item We also propose a removal attack resistant variant of SAS, called Robust SAS (RSAS). RSAS is designed such that it achieves the same \textit{SAT resilience} and \textit{effectiveness} levels as SAS and if the locking module of RSAS is removed, the remaining circuit will exhibit incorrect functionality for critical minterms.

    \item Experiment results show that, when locked using the same number of critical minterms, SAS and RSAS have higher \textit{SAT resilience} than SFLL and they have about the same level of effectiveness. In terms of area, power, and delay overhead, RSAS and SFLL have similar overheads in general.
\end{enumerate}

The rest of the paper is organized as follows. 
Sec. \ref{sec:sas_bg} introduces the background on SAT attack and existing logic locking schemes.
We show that SFLL's trade-off makes it incapable to secure real-world applications in Section~\ref{sec:attacks}.
We then mathematically prove that the trade-off applies to all logic locking schemes in Section~\ref{sec:trade-off}.
In Section~\ref{sec:countermeasure}, SAS's hardware structure is presented and its exponential SAT attack complexity is proved in theory. 
The removal attack resistant variant of SAS, \ie RSAS, is introduced in Section \ref{sec:RSAS}.
Section~\ref{sec:critical_minterms} describes the methodology to choose critical minterms.
Section~\ref{Sec_Exp} shows the experimental results which demonstrate that when the same set of critical minterms are selected by SAS, RSAS, and SFLL, SAS and RSAS achieve higher \textit{security} than SFLL while maintaining similar application-level effectiveness. Section~\ref{Sec_Conclusion} concludes the paper.

\section{Background} \label{sec:sas_bg}
\subsection{Attack Model}\label{ssec:attack_model}
Fig.~\ref{fig:attack_model} illustrates the threat model we consider which is consistent with the latest papers in the logic locking field~\cite{xie2018anti, shamsi2017appsat, liu2020strong, sengupta2020truly}.
The attacker can be either an untrusted foundry or an untrusted user who has the ability to reverse engineer the fabricated chip and obtain the locked gate-level netlist.
The attacker is considered to have the following resources:

\begin{figure*}[h]
    \centering
    \includegraphics[width=.7\textwidth]{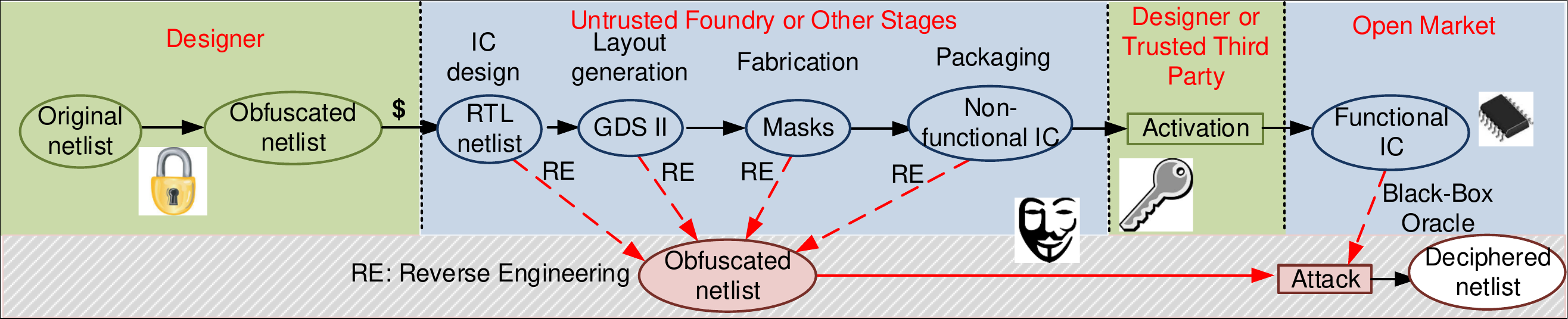}
    \caption{The targeted attack model of logic locking}
    \label{fig:attack_model}
\end{figure*}

\begin{enumerate}
    \item \textit{The locked gate-level netlist of the circuit under attack.} This can be obtained by reverse engineering the GDS-II file (which the foundry has) or a fabricated chip (which can be done by a capable end user).
    
    \item \textit{An activated chip.} 
    The attacker is considered to own an activated chip (\ie the one loaded with the correct key) since such a chip can be purchased from the open market. 
\end{enumerate}

In general, logic locking research does not assume that the attacker is able to insert probes into the activated circuit, \ie to observe the intermediate values. This is because protection schemes (\eg analog shield~\cite{ngo2017cryptographically}) can counter probing attacks.

\subsection{SAT Attack} \label{prelim}
{For any combinational digital circuit, the functionality can be expressed using a Boolean function $F: \vec{X}\rightarrow \vec{Y}$ where $\vec{X} \in \{0,1\}^n$ and $\vec{Y}\in \{0,1\}^{n_o}$ are the input and output of the circuit, respectively.
The logic locked circuit $F_L$ takes one more input, the key input $\vec{K}\in \{0,1\}^k$, in addition to the primary input $\vec{X}$, \ie $F_L: \vec{X},\vec{K}\rightarrow \vec{Y}$.}
If $\vec{K}$ is correct, then $\forall \vec{X}, F(\vec{X})=F_L(\vec{X},\vec{K})$.
$F(\vec{X})$ may not be equal to $F_L(\vec{X},\vec{K})$ if $\vec{K}$ is incorrect.
As stated earlier, the key is stored tamper-proof memory and is not accessible to the attacker.

The Boolean satisfiability-based attack, a.k.a. \emph{SAT attack} is a strong theoretical formulation to find the correct key of a locked circuit.
In the context of the SAT attack,  we use the \emph{Conjunctive Normal Form (CNF)}: $C(\vec{X},\vec{K},\vec{Y})$ to characterize Boolean satisfiability: $C(\vec{X},\vec{K},\vec{Y}) = \mathtt{TRUE}$ if  $\vec{X}$, $\vec{K}$, and $\vec{Y}$ satisfy $\vec{Y}=F_L(\vec{X},\vec{K})$, where $F_L$ stands for the Boolean functionality of the locked circuit. $C(\vec{X},\vec{K},\vec{Y}) = \mathtt{FALSE}$ otherwise.
SAT attacks run iteratively and prune out incorrect keys in every iteration. The attack consists of the following steps:
\begin{enumerate}
    \item In the initial iteration, the attacker looks for a primary input, $\vec{X}_1$, and two keys, $\vec{K}_\alpha$ and $\vec{K}_\beta$, such that the locked circuit produces two different outputs $\vec{Y}_\alpha$ and $\vec{Y}_\beta$:
        \begin{equation}
        C(\vec{X}_1,\vec{K}_\alpha,\vec{Y}_\alpha)\land C(\vec{X}_1,\vec{K}_\beta,\vec{Y}_\beta)\land (\vec{Y}_\alpha \ne \vec{Y}_\beta)
        \label{SAT_formula_first}
        \end{equation}
        $\vec{X}_1$ is called the \emph{Distinguishing Input (DI)}.
    \item The DI, $\vec{X}_1$, is applied to the activated circuit (the oracle) and the output $\vec{Y}_1$ is recorded.
        Note that $\vec{K}_\alpha$, $\vec{Y}_\alpha$, and $\vec{K}_\beta$, $\vec{Y}_\beta$ are not recorded.
        Only the DI and its correct output are carried over to the following iterations.
    \item In the $i^{\text{th}}$ iteration, a new DI and a pair of keys, $\vec{K}_\alpha$ and $\vec{K}_\beta$, are found.
    The newly found $\vec{K}_\alpha$ and $\vec{K}_\beta$ should produce correct outputs for all the DIs found in previous iterations. 
    To this end, we append a clause to Eq.~\eqref{SAT_formula_first}:
        \begin{equation}
        \begin{aligned}
        C(\vec{X}_i,\vec{K}_\alpha,\vec{Y}_\alpha)\land C(\vec{X}_i,\vec{K}_\beta,\vec{Y}_\beta)\land (\vec{Y}_\alpha \ne \vec{Y}_\beta) \\
        \bigwedge_{j=1}^{i-1}(C(\vec{X}_j,\vec{K}_\alpha,\vec{Y}_j)\land C(\vec{X}_j,\vec{K}_\beta,\vec{Y}_j))
        \end{aligned}
        \label{SAT_formula}
        \end{equation}
    In this way, all the wrong keys that corrupt the output of previously found DIs (\ie the output is different from that of the activated chip) are pruned out from the search space.
    
    \item SAT solves Eq. \eqref{SAT_formula} repeatedly until no more DI can be found, \ie Eq. \eqref{SAT_formula} is not satisfiable any more.
    \item In this case, there is no more DI. The output of the SAT attack is a key $\vec{K}$ that produces the same output as the activated circuit to all the DIs, which can be expressed using the following CNF:
        \begin{equation}
        \bigwedge_{i=1}^{\lambda}C(\vec{X}_i,\vec{K},\vec{Y}_i)
        \label{SAT_result}
        \end{equation}
        where $\lambda$ is the total number of SAT iterations.
\end{enumerate}

\begin{theorem}
SAT is guaranteed to find a correct key $\vec{K}_c$ to the locked circuit.
\label{thm:SAT}
\end{theorem}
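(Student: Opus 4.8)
The plan is to split the argument into two parts: first show that the iterative procedure terminates, and then show that whatever key it outputs is functionally correct. Throughout I would lean on the standing assumption, implicit in the setup, that the locked circuit actually admits a correct key --- namely the designer's key $\vec{K}_c$ satisfying $\forall \vec{X},\, F_L(\vec{X},\vec{K}_c)=F(\vec{X})$ --- together with the fact that the oracle's recorded output on each distinguishing input $\vec{X}_i$ is precisely $\vec{Y}_i=F(\vec{X}_i)=F_L(\vec{X}_i,\vec{K}_c)$. The central object is the set of \emph{surviving keys} after iteration $i$, i.e.\ those $\vec{K}$ consistent with $\bigwedge_{j=1}^{i} C(\vec{X}_j,\vec{K},\vec{Y}_j)$; the observation above shows that $\vec{K}_c$ survives every iteration.

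For termination, I would argue that each iteration strictly shrinks the surviving-key set. In iteration $i$ the solver returns $\vec{K}_\alpha,\vec{K}_\beta$ that both satisfy the clauses for $\vec{X}_1,\dots,\vec{X}_{i-1}$ (so both are currently surviving) yet produce outputs $\vec{Y}_\alpha\ne\vec{Y}_\beta$ on $\vec{X}_i$. Once the oracle fixes $\vec{Y}_i$, at most one of $\vec{Y}_\alpha,\vec{Y}_\beta$ can equal $\vec{Y}_i$, so appending the clause $C(\vec{X}_i,\vec{K},\vec{Y}_i)$ eliminates at least one of these two previously surviving keys. Since the key space is finite, of size $2^k$, the surviving set can strictly shrink only finitely often; hence the loop halts after some $\lambda\le 2^k$ iterations, at which point Eq.~\eqref{SAT_formula} is unsatisfiable.

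For correctness, I would reason by contradiction from the halting condition. Let $\vec{K}$ be the output key, which by construction satisfies Eq.~\eqref{SAT_result} and therefore survives all iterations. Suppose $\vec{K}$ were incorrect: then there is some input $\vec{X}^*$ with $F_L(\vec{X}^*,\vec{K})\ne F(\vec{X}^*)=F_L(\vec{X}^*,\vec{K}_c)$. But $\vec{K}_c$ also survives all iterations, so the triple consisting of $\vec{X}^*$ and the two keys $\vec{K},\vec{K}_c$, with their distinct outputs, satisfies Eq.~\eqref{SAT_formula} in the final iteration --- contradicting the fact that the loop terminated exactly because that formula is unsatisfiable. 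Hence $\vec{K}$ agrees with $F$ on every input and is a correct key.

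I expect the only genuinely delicate point to be the termination step: one must check that the two keys returned by the solver are \emph{both} feasible with respect to the accumulated constraints, so that discarding one is real progress rather than the re-elimination of an already-pruned key, and that the oracle value provably rules out at least one of them. The correctness step is then essentially immediate once one recognizes that unsatisfiability of Eq.~\eqref{SAT_formula} means every surviving key is functionally indistinguishable from the always-surviving correct key $\vec{K}_c$.
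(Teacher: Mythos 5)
Your proposal is correct, and its second half is essentially the paper's own proof: the paper argues by contradiction that if the returned key were wrong, there would exist an input minterm on which it disagrees with the actual key, and that this input together with the returned key and the actual key would satisfy Eq.~\eqref{SAT_formula}, contradicting the termination condition --- exactly your correctness step, with the same pairing of the returned key and the designer's key as $\vec{K}_\alpha,\vec{K}_\beta$. Where you go beyond the paper is the termination half: the paper implicitly treats the attack as reaching the state where no more DI exists, whereas you actually prove the loop halts, by noting that both solver-returned keys are feasible for the accumulated clauses (Eq.~\eqref{SAT_formula} conjoins those clauses for both $\vec{K}_\alpha$ and $\vec{K}_\beta$) and that the oracle response $\vec{Y}_i$ rules out at least one of them, so the surviving-key set strictly shrinks and the attack stops within $2^k$ iterations. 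This makes your argument a more complete reading of ``guaranteed to find'' (which includes halting) and yields the standard bound $\lambda \le 2^k$ for free; the paper's version buys brevity by folding termination into the definition of the attack. The delicate point you flag --- that discarding one of $\vec{K}_\alpha,\vec{K}_\beta$ is real progress --- is handled correctly in your write-up, since feasibility of both keys with respect to all earlier DI clauses is exactly what the second line of Eq.~\eqref{SAT_formula} enforces.
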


The proof is given in Appendix~\ref{proof:thm_sat}. Note that there can be multiple correct keys: some keys can be different from but functionally equivalent to the actual key in the activated chip.


\subsection{Existing Logic Locking Schemes}\label{ssec:anti-sat}
Multiple logic locking schemes have been proposed to thwart the SAT attack~\cite{yasin2016improving, yasin2017provably, yasin2016sarlock}.
There are two ways to mitigate the SAT attack: one is to increase the time for each SAT iteration and the other is to increase the number of SAT iterations.
{The former requires adding SAT-hard circuitry such as AES blocks \cite{yasin2016improving} or permutation blocks \cite{kamali2019full}. These techniques usually incur huge area overhead which is impractical for most circuits.}
The other approach is to exponentially increase the number of SAT iterations. This approach is also not perfect because a locking scheme must be rather ineffective to improve security. This is the case for Anti-SAT~\cite{xie2018anti}, SARLock~\cite{yasin2016sarlock}, and and TTLock~\cite{yasin2017ttlock}. All these techniques are vulnerable to the approximate SAT attacks (such as AppSAT~\cite{shamsi2017appsat} and Double-DIP~\cite{shen2017double}).

The state-of-the-art {\it stripped functionality logic locking (SFLL)} \cite{yasin2017provably} explores the trade-off between SAT resilience and effectiveness. SFLL comprises of two parts: a functionality stripped circuit (FSC) and a restore unit (RU). The FSC is the original circuit with the functionality modified for a set of {\it protected input cubes}. This modification makes SFLL resistant to removal attack. If the RU is removed, the FSC's functionality of the protected input cubes is different from the original circuit, thus making the attack unsuccessful.
The RU stores the key, compares the circuit's input with the key, and outputs a {\it restore vector} which is XOR'ed with the FSC output. If the key is correct, the restore vector will fix the FSC's output and the circuit will have correct output.
{There are two types of SFLL: SFLL-HD and SFLL-flex. SFLL-HD leaves very specific structural traces in the FSC which have been successfully captured by a functional analysis based attack~\cite{sirone2020functional}. SFLL-flex, on the other hand, can leave almost no structural traces in the FSC if a fault-injection-based approach is taken to strip the functionality \cite{sengupta2020truly}. However, very recently, a sensitivity-based approach to identify the protected cubes of SFLL-flex is proposed \cite{sweeney2020sensitivity}. The sensitivity of a Boolean function on a specific input value measures how likely the output bit is to change when an input bit is flipped. The authors of \cite{sweeney2020sensitivity} found that in the ISCAS85 benchmark suite, circuits with more input bits tend to have lower sensitivity on average, based on which they proposed a SAT formulation to find input minterms with high sensitivities in the stripped-functionality circuit. In many benchmarks, the stripped input minterms can be found in this way. 
In terms of a countermeasure, the authors of \cite{sweeney2020sensitivity} proposed to find and strip the functionality of critical minterms which, after inverting their functionality, will have sensitivity values close to the average sensitivity of all the input minterms. Doing so will hide the stripped minterms' sensitivities among other minterms and hence their sensitivities will no longer be `outliers' and reduce the sensitivity-based attack to a brute-force attack.
As SFLL-flex remains relatively secure, provides higher flexibility in selecting protected cubes, and is more relevant to SAS, {\it we focus on SFLL-flex} in this paper.}

An SFLL-flex configuration can be described using the number of protected cubes, $c$, and the number of specified bits of each cube, $k$, denoted as SFLL-flex$^{c\times k}$.
The authors of~\cite{yasin2017provably} derived the following characteristics of a circuit locked with SFLL-flex$^{c\times k}$: (1) the fraction of input minterms whose output will be corrupted by a wrong key (\ie the ``error rate'' of a wrong key) is $c\cdot 2^{-k}$; and (2) the probability that a SAT attack finds the correct key within $q$ iterations is $q\cdot 2^{\lceil log_2 {c} \rceil-k}$.
We illustrate this relationship in Fig.~\ref{fig:SFLL_tradeoff}. 
As a higher SAT success probability indicates weaker SAT resilience, SFLL inherently suffers from a trade-off between SAT resilience and effectiveness.
\begin{figure}[h]
    \centering
    \includegraphics[width=0.3\textwidth, trim={0 0 18cm 0cm},clip]{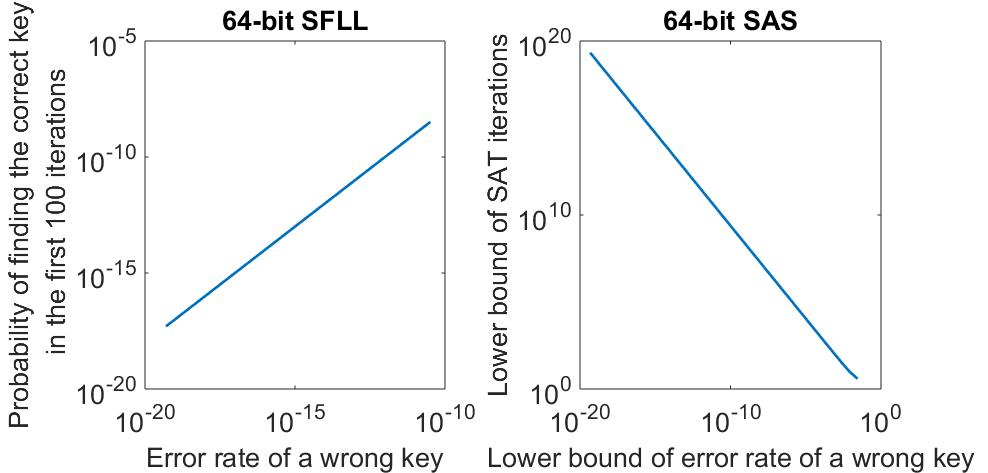}
    \caption{The positive correlation between the error rate of wrong keys and the probability that SAT finds the correct key in certain iterations}
    \label{fig:SFLL_tradeoff}
\end{figure}

\section{Insufficiency of SFLL for Real-World Applications}  \label{sec:attacks}

In this section, we investigate the application-level effectiveness of SFLL~\cite{yasin2017provably}.
Specifically, we lock the multiplier within a 32-bit 80386 processor since it is the largest combinational component.
The application-level effects are evaluated using both generic and machine learning (ML) benchmarks.
{\it We emphasize ML-based applications because they are inherently error-resilient and hence are more difficult to protect using logic locking.} 
Details of the benchmarks are listed in Table~\ref{tab:dataset_network}.

\begin{table}[h]
\centering
 \scriptsize
  \caption{Application benchmark details}
\begin{tabular}{|c|c|c|}
\hline
{\bf Benchmark Type} & {\bf Quantity} & {\bf Content} \\
\hline
Generic Applications & 9 & The PARSEC Benchmark Suite~\cite{bienia2008parsec} \\ \hline
Machine Learning & 5 & MNIST~\cite{lecun1998gradient}, SVHN~\cite{netzer2011reading}, CIFAR10~\cite{krizhevsky2009learning}, ILSVRC-2012~\cite{deng2012ilsvrc}, Oxford102~\cite{Nilsback08} \\ \hline
\end{tabular}%
  \label{tab:dataset_network}%
\end{table}%

{In order to evaluate the application-level impact of a logic locking scheme, we modify the GEM5~\cite{binkert2011gem5} simulator to reflect the effects of logic locking. Specifically, before simulation, based on the key value, the modified GEM5 simulator calculates the set of input minterms whose output will be corrupted according to the logic locking configuration. During the simulation, when the locked module gets any of these input minterms, the simulator randomizes the locked module's output to emulate locking-induced corruption. Under all other circumstances, the simulator works in the same way as GEM5.}
In this way, the circuit-level error induced by an incorrect (including approximate) key can be evaluated at the application level. 
This framework is illustrated in Fig.~\ref{fig:error_transfer} which is similar to the strategy used in~\cite{chakraborty2018gpu, zuzak2019memory}.

\begin{figure}[h]
    \centering
    \includegraphics[width=0.75\textwidth,trim={1 1 1 1}, clip]{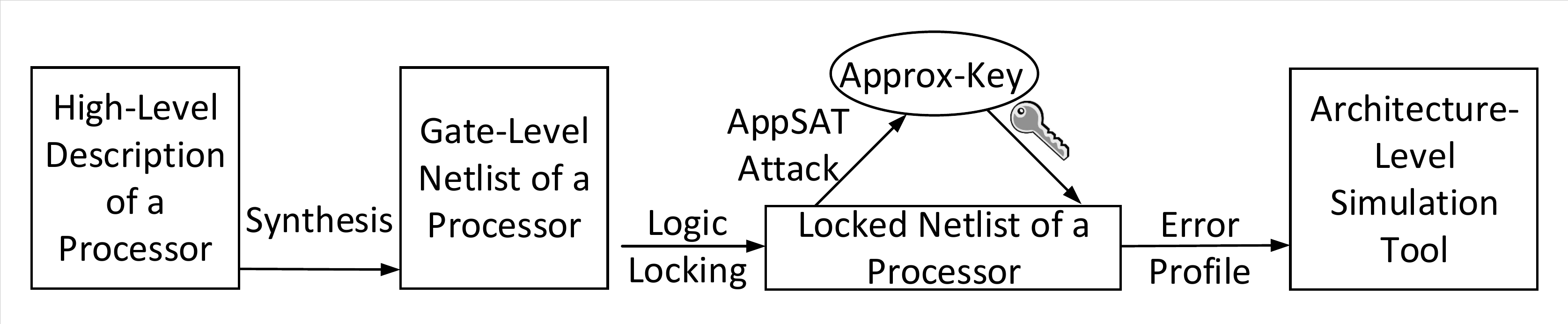}
    \caption{Our Experimental Framework}
    \label{fig:error_transfer}
\end{figure}

SFLL allows the designer to explore the trade-off between effectiveness and SAT resilience. We show that a ``sweet spot'' does not exist. In our experiments, we lock the multiplier with various SFLL configurations, each having a different level of SAT resilience, quantified by the average SAT iterations to unlock (as the X axis in Fig.~\ref{fig:SFLL_app_tradeoff}). 
The effectiveness of each locking scheme is evaluated by running the PARSEC and ML benchmarks on the locked processors loaded with approximate keys. The percentage of PARSEC benchmark runs with incorrect outcome and the accuracy loss of ML models are used as the criteria to evaluate the effectiveness of each locking configuration. The trade-off is illustrated in Fig.~\ref{fig:SFLL_app_tradeoff}.

\begin{figure}[htb]
    \centering
    \includegraphics[width=0.49\textwidth]{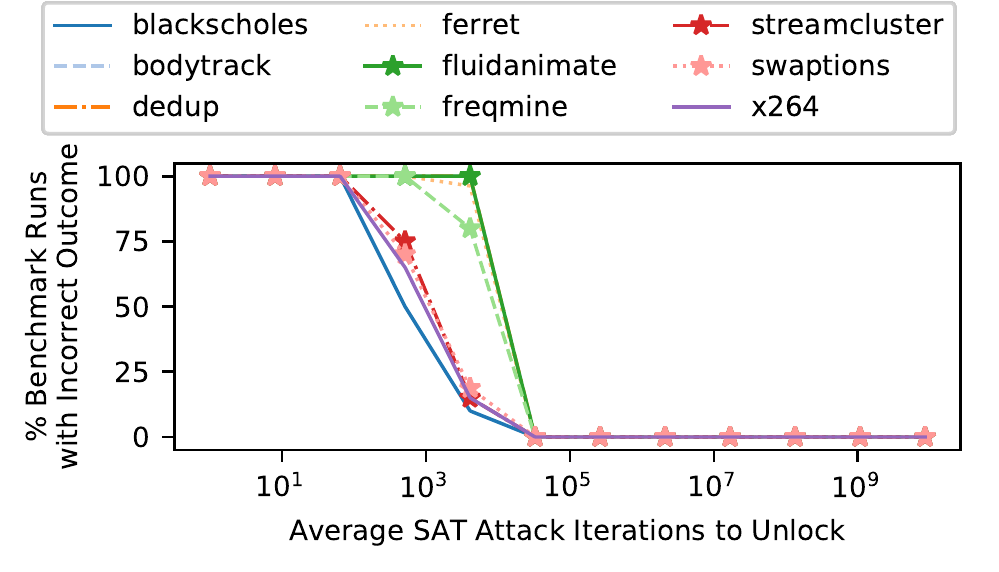}
    \includegraphics[width=0.49\textwidth]{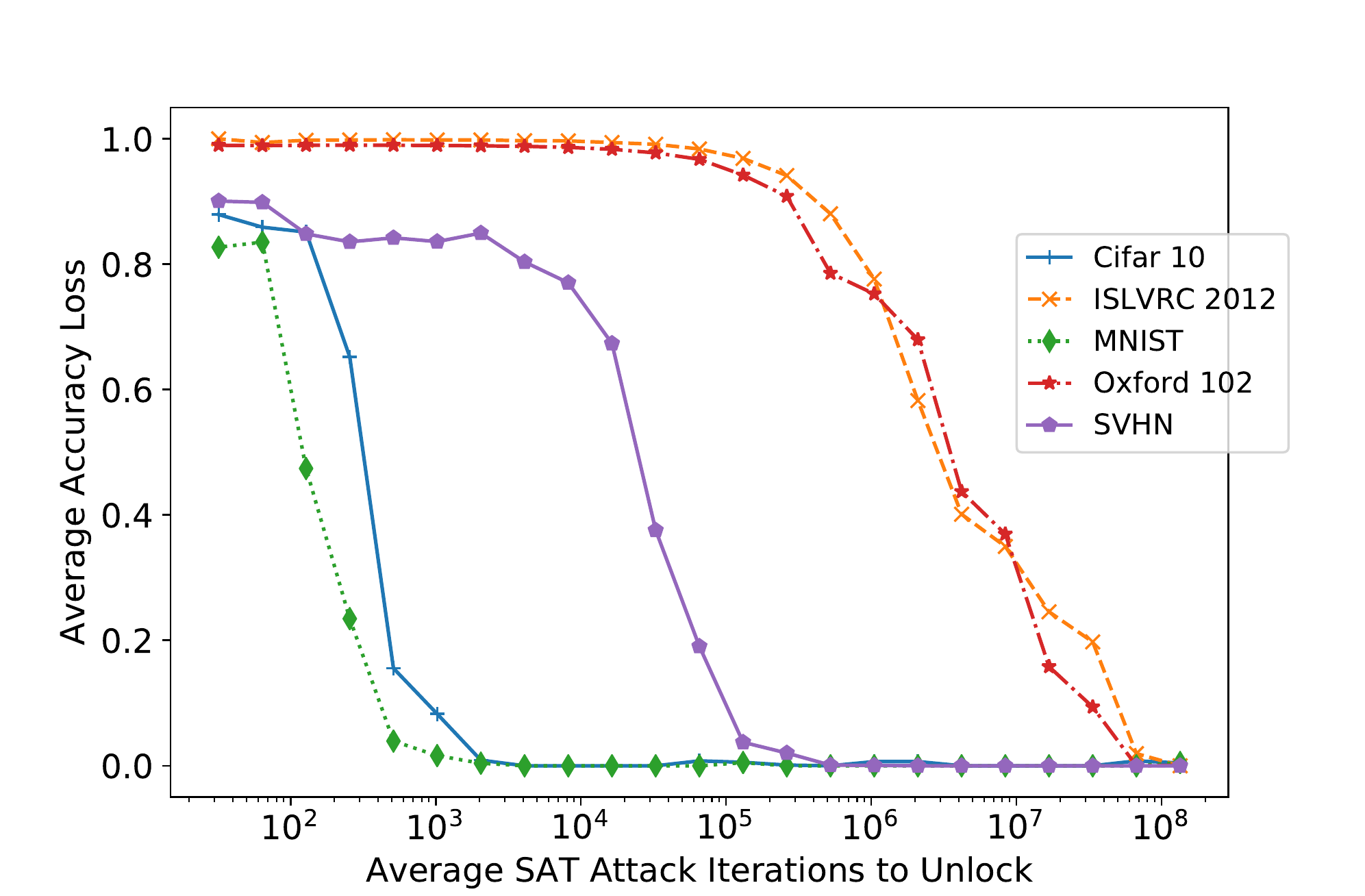}
    \caption{SAT resiliency vs. locking effectiveness trade-off. Left: PARSEC benchmarks. Right: ML benchmarks.}
    \label{fig:SFLL_app_tradeoff}
\end{figure}

From Fig.~\ref{fig:SFLL_app_tradeoff}, we observe that the wrong keys' impact decreases with the increase in SAT resiliency. In order to have a visible accuracy drop for the most error-resilient benchmarks, the SFLL locked processor cannot endure more than roughly 1000 SAT iterations. Such a locking scheme is extremely vulnerable since 1000 SAT iterations can be fulfilled within minutes. Therefore, a logic locking scheme that ensures high application-level impact without sacrificing SAT resiliency is needed.

\section{Fundamental Trade-off for All Logic Locking Schemes}\label{sec:trade-off}

This section generalizes the trade-off of SFLL to all logic locking schemes. We start with definitions of concepts and then prove the relationship between SAT resilience and effectiveness. {Recall that we use $F(\vec{X})$ to denote the Boolean functionality of the original circuit and $F_L(\vec{X},\vec{K})$ to denote the Boolean functionality of the locked circuit, where $\vec{X}$ is the input minterm and $\vec{K}$ is the key.

\begin{definition}
We say that a key $\vec{K}$ {\bf corrupts} an input minterm $\vec{X}$ if and only if the locked circuit's output is different from the original circuit's output when $\vec{X}$ is the input, \ie $F_L(\vec{X},\vec{K}) \ne F(\vec{X})$.
\end{definition}

\begin{definition}
The {\bf key error rate (KER)} $e_{\vec{K}}$ of a key $\vec{K}$ is defined as the number of input minterms corrupted by the key $\vec{K}$ divided by the total number of input minterms.
\end{definition}

Let $\XXX_{\vec{K}}$ be the set of input minterms corrupted by $\vec{K}$. Then,
\begin{equation}
    e_{\vec{K}} = \frac{|\XXX_{\vec{K}}|}{2^n}
\end{equation}
where $n$ is the number of bits in the input. 

We use $e_w$ to denote the average KER across all the {\it wrong keys}.
\begin{equation}
    e_w = \frac{1}{|\KKK^W|}\sum_{\forall\vec{K}\in\KKK^W}e_{\vec{K}}
\end{equation}

\begin{definition}
The {\bf input error rate (IER)} $\gamma_{\vec{X}}$ of an input minterm $\vec{X}$ is the number of wrong keys that corrupt this minterm divided by the total number of wrong keys.
\end{definition}
}
Let $\KKK_{\vec{X}}$ be the set of wrong keys that corrupt the input minterm $\vec{X}$ and $\KKK^W$ be the set of all wrong keys. Then, 
\begin{equation}
    \gamma_{\vec{X}} = \frac{|\KKK_{\vec{X}}|}{|\KKK^W|}
\label{eq:ier_def}
\end{equation}
Let $\gamma$ denote the average IER over all the input minterms, \ie
\begin{equation}
    \gamma = \frac{1}{2^n}\sum_{\forall \vec{X}\in\{0,1\}^n}\gamma_{\vec{X}}
\label{eq:avg_ier_def}
\end{equation}

Let us illustrate the above concepts with the following example. We consider a circuit with two input bits $(x_0, x_1)$ and locked with a two-bit key $(k_0, k_1)$, as shown in Fig.~\ref{fig:locking_example}. Table~\ref{tab:locking_truth_table} is the truth table for each possible input minterm and key. If a key corrupts an input minterm, the corresponding cell is marked with (\xmark).
\begin{figure}[h]
    \centering
    \includegraphics[width=0.6\textwidth, trim={1 1 1 1}, clip]{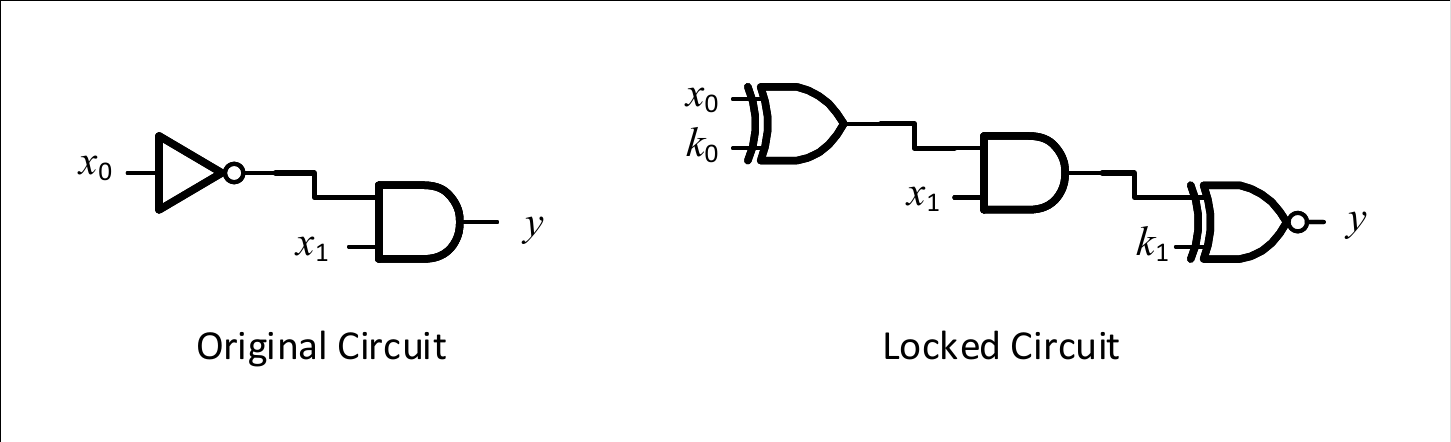}
    \caption{An example of logic locking, with the original circuit on the left and the locked circuit on the right.}
    \label{fig:locking_example}
\end{figure}
\begin{table}[h]
    \centering
    \scriptsize
    \caption{Truth table of the locked circuit in Fig.~\ref{fig:locking_example}}
    \begin{tabular}{|c|c|c|c|c|c|c|c|}
    \hline
     & $\vec{K}=(0,0)$ & $\vec{K}=(0,1)$ & \cellcolor{green!25}$\vec{K}=(1,1)$ & $\vec{K}=(1,0)$ & Correct $y$ & $\gamma_{\vec{X}}$ & $\gamma$\\ \hline
     $\vec{X}=(0,0)$ & 1(\xmark) & 0 & \cellcolor{green!25}0 & 1(\xmark) & 0 & $\frac{2}{3}$ & \multirow{4}{*}{$\frac{2}{3}$} \\    
    \cline{1-7}
     $\vec{X}=(0,1)$ & 1 & 0(\xmark) & \cellcolor{green!25}1 & 0(\xmark) & 1 & $\frac{2}{3}$ & \\
     \cline{1-7}
     $\vec{X}=(1,1)$ & 0 & 1(\xmark) & \cellcolor{green!25}0 & 1(\xmark) & 0 & $\frac{2}{3}$ & \\     
     \cline{1-7}
     $\vec{X}=(1,0)$ & 1(\xmark) & 0 & \cellcolor{green!25}0 & 1(\xmark) & 0 & $\frac{2}{3}$ & \\     
     &&&&&\multicolumn{3}{c}{} \\ [-2.43ex]
     \hline
    &&&&&\multicolumn{3}{c}{} \\ [-2.43ex]
     $e_{\vec{K}}$ & $\frac{1}{2}$ & $\frac{1}{2}$ & \cellcolor{green!25} 0 & 1 & \multicolumn{3}{c}{} \\     &\multicolumn{4}{c|}{}&\multicolumn{3}{c}{} \\ [-2.43ex]
     \cline{1-5}
    &\multicolumn{4}{c|}{}&\multicolumn{3}{c}{} \\ [-2.43ex]
     $e_w$ & \multicolumn{4}{c|}{$\frac{2}{3}$}& \multicolumn{3}{c}{} \\ 
     \multicolumn{8}{c}{}\\ [-2.43ex] \cline{1-5}
    \end{tabular}
    \label{tab:locking_truth_table}
\end{table}

In Table~\ref{tab:locking_truth_table}, we also calculate the KER of each key, the IER of each input minterm, and their averages. We can observe that both the average KER and average IER equal $\frac{2}{3}$. It turns out that this equality is universal in logic locking:

\begin{theorem}
The average KER of all wrong keys equals the average IER of all input minterms, \textit{i. e.} $e_w = \gamma$.
\label{thm:equivalence}
\end{theorem}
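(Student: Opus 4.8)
The plan is to prove the identity by a standard double-counting (Fubini) argument over the set of corrupting (minterm, wrong-key) pairs. The central observation is that both $e_w$ and $\gamma$ are, up to the same normalization factor, measuring the same object: the total number of pairs $(\vec{X},\vec{K})$ with $\vec{K}\in\KKK^W$ for which $\vec{K}$ corrupts $\vec{X}$.

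First I would introduce a corruption indicator $\mathbf{1}[\vec{K}\text{ corrupts }\vec{X}]$, equal to $1$ precisely when $F_L(\vec{X},\vec{K})\ne F(\vec{X})$ and $0$ otherwise, with $\vec{X}$ ranging over all $2^n$ input minterms and $\vec{K}$ ranging over the wrong keys $\KKK^W$. By the definitions of $\XXX_{\vec{K}}$ and $\KKK_{\vec{X}}$, the two cardinalities appearing in the theorem are exactly the row and column marginals of this indicator: $|\XXX_{\vec{K}}| = \sum_{\vec{X}}\mathbf{1}[\vec{K}\text{ corrupts }\vec{X}]$ and $|\KKK_{\vec{X}}| = \sum_{\vec{K}\in\KKK^W}\mathbf{1}[\vec{K}\text{ corrupts }\vec{X}]$.

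Next I would substitute these into the two averages and swap the summation order. Unfolding the definition of $e_{\vec{K}}$ gives
\[
 e_w = \frac{1}{|\KKK^W|}\sum_{\vec{K}\in\KKK^W}\frac{|\XXX_{\vec{K}}|}{2^n} = \frac{1}{2^n|\KKK^W|}\sum_{\vec{K}\in\KKK^W}\sum_{\vec{X}}\mathbf{1}[\vec{K}\text{ corrupts }\vec{X}],
\]
while substituting Eq.~\eqref{eq:ier_def} into Eq.~\eqref{eq:avg_ier_def} gives
\[
 \gamma = \frac{1}{2^n}\sum_{\vec{X}}\frac{|\KKK_{\vec{X}}|}{|\KKK^W|} = \frac{1}{2^n|\KKK^W|}\sum_{\vec{X}}\sum_{\vec{K}\in\KKK^W}\mathbf{1}[\vec{K}\text{ corrupts }\vec{X}].
\]
Since the two double sums differ only in the order of summation over the same finite index set and carry the identical prefactor $\frac{1}{2^n|\KKK^W|}$, they are equal, yielding $e_w=\gamma$.

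There is no genuine analytic obstacle here; the entire difficulty is bookkeeping, so the step I would be most careful about is fixing the ground sets and normalizations. Concretely, I must ensure that both averages range over exactly the same universe — all $2^n$ minterms and wrong keys \emph{only} (correct keys are excluded from $\KKK^W$) — so that the two prefactors genuinely coincide, and that the relation defining $\XXX_{\vec{K}}$ (minterms corrupted by a fixed wrong key) is literally the transpose of the relation defining $\KKK_{\vec{X}}$ (wrong keys corrupting a fixed minterm). Once the indicator matrix is pinned down this way, the theorem is immediate from the equality of its total row-sum and total column-sum.
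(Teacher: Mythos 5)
Your proof is correct and is essentially the same argument as the paper's: the paper counts the edges of the bipartite graph between minterms and the wrong keys that corrupt them, which is exactly your indicator-function double count with the summation order swapped. Both unfold the definitions to the identical normalized sums, so nothing is missing.
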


The proof is in Appendix~\ref{proof:thm_equi}.

Let $\lambda$ be the number of SAT iterations that are needed to find the correct key.
\begin{theorem}
The expected number of SAT iterations $E[\lambda]$ is lower bounded by $\frac{1}{\gamma}$.
\label{thm:iterations}
\end{theorem}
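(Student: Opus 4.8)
The plan is to track how many wrong keys each SAT iteration eliminates and to argue that, since every wrong key must eventually be pruned, the number of iterations cannot be small. First I would make precise the pruning performed in one iteration: when the solver fixes a distinguishing input $\vec{X}_i$ and queries the oracle, the appended clause in Eq.~\eqref{SAT_formula} requires all surviving key candidates to reproduce the oracle's output on $\vec{X}_i$, so every still-alive wrong key belonging to $\KKK_{\vec{X}_i}$ (the keys that corrupt $\vec{X}_i$) is discarded. Since by Theorem~\ref{thm:SAT} the attack terminates only when no wrong key remains, the corruption sets of the chosen DIs must cover the entire wrong-key set, $\bigcup_{i=1}^{\lambda}\KKK_{\vec{X}_i} = \KKK^W$.

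Next I would turn the covering property into a deterministic counting inequality. Using that a union is no larger than the sum of its parts, $|\KKK^W| = |\bigcup_{i=1}^{\lambda}\KKK_{\vec{X}_i}| \le \sum_{i=1}^{\lambda}|\KKK_{\vec{X}_i}|$; dividing by $|\KKK^W|$ and substituting the definition $\gamma_{\vec{X}_i} = |\KKK_{\vec{X}_i}|/|\KKK^W|$ from Eq.~\eqref{eq:ier_def} gives $\sum_{i=1}^{\lambda}\gamma_{\vec{X}_i} \ge 1$, which holds for every execution of the attack regardless of how the DIs are selected.

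The last step passes to expectations. Treating the solver's choice of distinguishing input as uniform over the input space, each term has mean $E[\gamma_{\vec{X}_i}] = \frac{1}{2^n}\sum_{\vec{X}}\gamma_{\vec{X}} = \gamma$ by Eq.~\eqref{eq:avg_ier_def}, and $\lambda$ is a stopping time determined by $\vec{X}_1,\dots,\vec{X}_\lambda$. Wald's identity then yields $E[\sum_{i=1}^{\lambda}\gamma_{\vec{X}_i}] = \gamma\,E[\lambda]$, and taking expectations of $\sum_{i}\gamma_{\vec{X}_i}\ge 1$ forces $\gamma\,E[\lambda]\ge 1$, i.e. $E[\lambda]\ge 1/\gamma$. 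Combined with Theorem~\ref{thm:equivalence} this reads $E[\lambda]\ge 1/e_w$, exposing the resilience--effectiveness trade-off directly.

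The hard part will be justifying the expectation step, since a real SAT solver neither samples DIs uniformly nor independently: it only proposes inputs that still separate two surviving candidates, and its selections are adaptive. Making the Wald-type argument rigorous therefore requires either adopting uniform random DI selection as the model underlying $E[\lambda]$, or replacing it by a per-iteration conditional bound --- showing that the expected number of newly pruned keys in iteration $i$, given the history, is at most $\sum_{\vec{K}\in\KKK^W}e_{\vec{K}} = \gamma|\KKK^W|$ (using Theorem~\ref{thm:equivalence} to rewrite $\sum_{\vec{K}}e_{\vec{K}}$) and invoking optional stopping. Handling this adaptivity cleanly, rather than the elementary counting, is where the genuine care lies.
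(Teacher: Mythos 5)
Your proposal is correct and is essentially the paper's own argument made rigorous: the paper informally asserts that each iteration prunes on average at most $\gamma|\KKK^W|$ of the $|\KKK^W|$ wrong keys and then divides, which is exactly your covering-plus-union-bound inequality $\sum_{i}\gamma_{\vec{X}_i}\ge 1$ combined with a Wald-type expectation step. The adaptivity concern you raise at the end is genuine --- with adaptively chosen, necessarily valid DIs the bound can in fact fail (e.g., when a single minterm carries all the corruption, the one available DI ends the attack in one iteration) --- but this afflicts the paper's proof equally, and the paper escapes it only via the assumption, stated explicitly in Section~\ref{sec:countermeasure}, that DIs are chosen uniformly at random, which is precisely the modeling assumption your Wald step requires.
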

\begin{proof}
{Recall that $\gamma$ is the average IER of all input minterms, defined in Equation \eqref{eq:avg_ier_def}. In each SAT iteration, a distinguishing input (DI) is found and \textit{all} the wrong keys that corrupt this DI will be pruned out from the key search space. The average number of such wrong keys for each DI is $\gamma |\KKK^W|$. Because some of the wrong keys may have already been pruned out in previous iterations, the average number of wrong keys pruned out in each SAT iteration is at most $\gamma |\KKK^W|$. SAT attack finishes when every wrong key is pruned out, hence the average number of SAT iterations can be lower bounded by:}
\begin{equation}
    E[\lambda] \ge \frac{|\KKK^W|}{\gamma |\KKK^W|}=\frac{1}{\gamma}
\end{equation}
Hence proved.
\end{proof}

Theorems~\ref{thm:equivalence} and~\ref{thm:iterations} explicitly point out that there exists an inverse relationship between $e_w$ and the lower bound of E[$\lambda$]. This quantifies the trade-off between them. This trade-off applies to any logic locking scheme.
Note that different input minterms may inject a different amount of error at the application level. By assigning higher IER to a few minterms with high application-level impact, we can achieve high effectiveness while maintaining high SAT resilience by keeping $\gamma$ low and $E[\lambda]$ high. This is the main intuition behind SAS.

\section{The Architecture and Properties of SAS} \label{sec:countermeasure}

{Theorems \ref{thm:equivalence} and \ref{thm:iterations} have expressed a relationship between input \& key error rates and the number of SAT iterations. These quantities are related to two objectives of logic locking:}

\begin{enumerate}
\item {\bf Effectiveness:} Any incorrect key should have a high application-level error impact.
\item {\bf SAT resilience:} The complexity of determining the correct key via SAT attacks should be very high.
\end{enumerate}

{Assuming that each SAT iteration takes constant time, these two objectives may compete with each other.}
In this section, we introduce \emph{Strong Anti-SAT (SAS)} logic locking scheme which aims to achieve both objectives simultaneously. {SAS guarantees that the expected complexity of SAT attack increases exponentially in the size of the locking key} while having a significant impact on the accuracy of real-world applications. In SAS, instead of uniformly distributing the error across all possible inputs, we identify certain input patterns which potentially have a higher impact on the overall application-level error. We call these inputs {\bf critical minterms}. SAS is configured in such a way that any incorrect key corrupts at least 1 critical minterm, resulting in high input error rate (IER) for critical minterms. For the other minterms, the IER is low.

\subsection{The SAS Block}
Let $\MMM$ be the set of critical minterms and $m=|\MMM|$ be the number of critical minterms.
For the ease of implementation, we always choose $m$ to be a power of 2.
The basic locking infrastructure is the \emph{SAS} block which is illustrated in Fig.~\ref{SELL_1}.
{When the input minterm $\vec{X}$ has $n$ bits, the key $\vec{K}$ has $2n$ bits. Let $\vec{K}_1$ and $\vec{K}_2$ be the first $n$ bits and the second $n$ bits of $\vec{K}$, respectively, since they play different roles in SAS.}
In order to describe the mechanism of the SAS locking scheme clearly, we use a reverse order and start our illustration from the output side.

\begin{figure}[h]
\centering
\includegraphics[width=0.5\textwidth, trim={1 1.5cm 1 1.4cm}, clip]{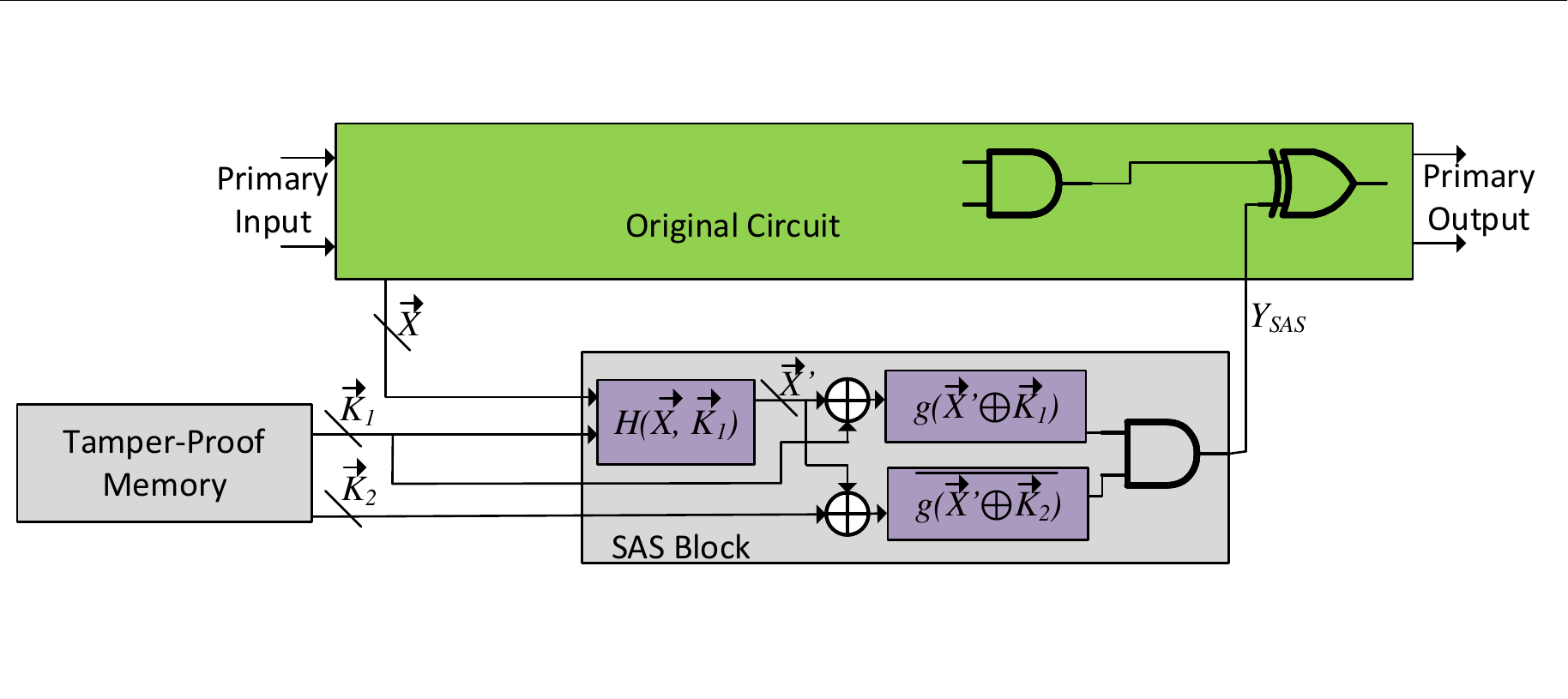}
\caption{The Architecture of SAS Configuration 1 with the Details of the SAS Block}
\label{SELL_1}
\end{figure}

$Y_{SAS}$ is the output of the SAS block. If $Y_{SAS}=1$, a fault will be injected into the original circuit. $Y_{SAS}$ is the output of an AND gate, whose inputs are two function blocks with opposite functionalities, namely $g$ and $\bar g$.
$g$ is a function with an on-set-size of 1, \ie only one input minterm will have output 1 and all others will have output 0. {We call this very input value $\vec{X}_g$, which can be \textit{any} $n$-bit value determined by the designer. $\bar{g}$ has the opposite functionality of $g$, \ie only when $\vec{X}=\vec{X_g}$ will $\bar g(\vec{X})=0$. Due to the AND gate which produces $Y_{SAS}$, in order to corrupt an input minterm, the outputs of both $g$ and $\bar{g}$ need to be 1.

Each input of $g$ and $\bar g$ is the output of a XOR or XNOR gate and the designer can choose either one for each gate. We use ``$\oplus$'' to denote this bit-wise operation. It can be observed that the output AND gate, the $g$ and $\bar g$ functions, and the $\oplus$ gates constitute an Anti-SAT block. Notice that, the $\oplus$ operations before $g$ and $\bar g$ need not be the same. However, without losing generality, we assume that they are the same for the ease of the following discussions.

A function block $\vec{X}'=H(\vec{X},\vec{K}_1)$ is inserted before $g$ and $\bar{g}$. This block will determine the IER of $\vec{X}$ and it works as follows. 
If $\vec{X}$ is not a critical minterm, \ie $\vec{X}\notin\MMM$, then $\vec{X}'=\vec{X}$ \ie the input minterm simply passes through $H$. In this case, a wrong key must satisfy two conditions to corrupt input minterm $\vec{X}$: $\vec{K}_1=\vec{X}\oplus\vec{X}_g$ and $\vec{K}_2\ne\vec{X}\oplus\vec{X}_g$. 
Thus there are $2^n-1$ wrong keys (1 possible $\vec K_1$ combined with $2^n-1$ possible $\vec{K}_2$) that corrupt $\vec{X}$, out of the total number of $2^n(2^n-1)$ wrong keys. 
Therefore, the IER of a non-critical input minterm $\vec{X}$, $\gamma_{\vec{X}}$, is very low: $\gamma_{\vec{X}}=\frac{2^n-1}{2^n(2^n-1)}=2^{-n}$.

If $\vec{X}$ is a critical minterm, \ie $\vec{X}\in\MMM$, its IER will be $\gamma_{\vec{X}}=\gamma_c=\frac{1}{m}$, where $\gamma_c$ denotes the IER of each critical minterm and $m$ is the total number of critical minterms. This is achieved as follows. For $\frac{2^n}{m}$ values of $\vec{K}_1$, $H(\vec{X},\vec{K}_1)=\vec X_g$. We use $\KKK^1_{\vec{X}}$ to denote this set of $\vec{K}_1$ values. 
For other $\vec{K}_1$ values, the input minterm will still pass through $H$, \ie $H(\vec{X},\vec{K}_1)=\vec X$.
Moreover, the $\KKK^1_{\vec{X}}$ sets for each critical minterm are mutually exclusive and evenly partition $\{0,1\}^n$, \ie 
\begin{equation}
    \forall \vec{X_1},\vec{X_2} \in \MMM, \
    |\KKK^1_{\vec{X_1}}| = |\KKK^1_{\vec{X_2}}|, \
    \KKK^1_{\vec{X_1}}\land \KKK^1_{\vec{X_2}}=\emptyset \text{, and }
    \bigcup_{\forall \vec{X}\in\MMM} \KKK^1_{\vec{X}}=\{0,1\}^n
\label{eq:wrong_key_partition}
\end{equation}
where $n$ is the number of bits in $\vec{X}$, $\vec{K}_1$, and $\vec{K}_2$. Notice that, there are many possibles implementations of the partition of $\vec{K}_1$ space that satisfies Equation \eqref{eq:wrong_key_partition}. For example, if there are two critical minterms $\vec{X}_1$ and $\vec{X}_2$, the designer can let $\KKK^1_{\vec{X}_1}$ be the set of $\vec{K}_1$ values whose most significant bit (MSB) is 1 and $\KKK^1_{\vec{X}_2}$ be the set of $\vec{K}_1$ values whose MSB is 0.
In Table~\ref{case1table}, we demonstrate how the space of $\vec{K}_1$ is partitioned to corrupt each critical and non-critical input minterm. In the first row of Table~\ref{case1table}, we use the indices to indicate \textit{how many} $K_1$ values there are that can lead a wrong key to corrupt each input minterm. The indices can accord to any ordering of the elements in $\{0,1\}^n$.
}

\begin{table}[h]
\centering
\scriptsize
\caption{Illustration of how $m$ critical minterms partition the set of wrong keys}
\begin{tabular}{p{0.8cm} p{0.5cm}| >{\centering}p{0.3cm} >{\centering}p{0.3cm} >{\centering}p{0.3cm}| >{\centering}p{0.4cm} >{\centering}p{0.4cm} >{\centering}p{0.4cm}| >{\centering}p{0.2cm} p{0.2cm}}
\multicolumn{2}{c|}{$\vec{K}_1$ of wrong keys}& $\vec{k}_1$ & $\cdots$ & $\vec{k}_{\frac{2^n}{m}}$ & $\vec{k}_{\frac{2^n}{m}+1}$ & $\cdots$ & $\vec{k}_{2\frac{2^n}{m}}$ & $\cdots$ &$\vec{k}_{2^n}$  \\ \hline
\multirow{4}{*}{\parbox{1cm}{critical minterms}}
&$\vec{X}_1$ & $\bullet$ & $\bullet$ & $\bullet$ & &&&& \\
&$\vec{X}_2$ & & & &$\bullet$&$\bullet$&$\bullet$& & \\
&$\cdots$ & & & & & & $\cdots$ & &  \\
&$\vec{X}_m$ & & & & & & &$\bullet$&$\bullet$ \\\hline
\multirow{4}{*}{\parbox{1cm}{non-critical minterms} }
&$\vec{X}_{m+1}$ & $\bullet$ &&&&  &&& \\
&$\vec{X}_{m+2}$ && $\bullet$ &&&&  && \\
&$\cdots$ &&&& & $\ddots$ & & & \\
&$\vec{X}_{2^n}$ &&&&&&&& $\bullet$ \\
\end{tabular}
\label{case1table}
\end{table}

The 2 configurations of SAS will be introduced in the rest of this section.

\subsection{Configuration 1: SAS with One SAS Block}

This configuration is illustrated in Fig.~\ref{SELL_1}.
In this configuration, there is one SAS block.
As the critical minterms evenly partition the set of wrong keys, the IER of each critical minterm is $\gamma_c=\frac{1}{m}$.
Below we derive the SAT resilience of this configuration assuming that the SAT solver chooses a DI uniformly at random in each iteration.
This is a common assumption~\cite{yasin2017ttlock, yasin2017provably, sengupta2020truly}.
The SAT resilience is quantified using the expected number of SAT iterations $E[\lambda]$.
To start with, we give 2 useful lemmas.

\begin{lemma}
Let $\DDD^i$ be the set of DIs that have been chosen in the first $i$ iterations and $\vec{X}$ be a primary input minterm.
If all the wrong keys that corrupt $\vec{X}$ have already been pruned out in the previous SAT iterations, \ie $\KKK_{\vec{X}} \subset \bigcup_{\forall \vec{X'}\in \DDD^i}\KKK_{\vec{X'}}$, then $\vec{X}$ cannot be the DI of any SAT iteration beyond $i$.
\label{lemma:removal}
\end{lemma}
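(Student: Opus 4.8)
The plan is to argue by contradiction, unpacking the definition of a distinguishing input (DI) together with the bookkeeping that the SAT attack maintains across iterations. The key observation I would set up first is what pruning does: when a DI $\vec{X'}$ is selected, the oracle supplies the correct output $F(\vec{X'})$, and from that point on every key retained by the solver must reproduce $F(\vec{X'})$ on $\vec{X'}$. Equivalently, every key that corrupts $\vec{X'}$ — that is, every element of $\KKK_{\vec{X'}}$ — is permanently eliminated. Hence after the first $i$ iterations the set of pruned wrong keys is exactly $\bigcup_{\vec{X'}\in\DDD^i}\KKK_{\vec{X'}}$, and no key that survives past iteration $i$ can lie in this union.

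Next I would recall what it takes for $\vec{X}$ to be eligible as a DI in some iteration $j>i$: by the SAT formulation in Equation \eqref{SAT_formula} there must exist two keys $\vec{K}_\alpha,\vec{K}_\beta$, both consistent with every DI chosen before iteration $j$ (in particular with every DI in $\DDD^i$, since $\DDD^i\subseteq\DDD^{j-1}$), such that $F_L(\vec{X},\vec{K}_\alpha)\ne F_L(\vec{X},\vec{K}_\beta)$. Because these two outputs differ, at least one of them must differ from the correct output $F(\vec{X})$; without loss of generality assume $F_L(\vec{X},\vec{K}_\alpha)\ne F(\vec{X})$. By the definition of corruption, $\vec{K}_\alpha$ corrupts $\vec{X}$, and since a correct key never corrupts any minterm, $\vec{K}_\alpha$ is necessarily a wrong key, i.e. $\vec{K}_\alpha\in\KKK_{\vec{X}}$.

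The contradiction then falls out of the hypothesis. Invoking $\KKK_{\vec{X}}\subseteq\bigcup_{\vec{X'}\in\DDD^i}\KKK_{\vec{X'}}$, I locate some already-chosen DI $\vec{X'}\in\DDD^i$ with $\vec{K}_\alpha\in\KKK_{\vec{X'}}$. But $\vec{K}_\alpha\in\KKK_{\vec{X'}}$ means precisely that $\vec{K}_\alpha$ produces an incorrect output on $\vec{X'}$, whereas eligibility of $\vec{X}$ as a DI in iteration $j>i$ demanded that $\vec{K}_\alpha$ be consistent with $\vec{X'}$ (produce $F(\vec{X'})$). This is impossible, so no admissible pair $\vec{K}_\alpha,\vec{K}_\beta$ exists and $\vec{X}$ cannot serve as a DI beyond iteration $i$.

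The step I would be most careful about — and the only place where a definitional subtlety hides — is the deduction that the differing output forces a genuine element of $\KKK_{\vec{X}}$: I must confirm that a key corrupting $\vec{X}$ is in fact a \emph{wrong} key, so that it belongs to the wrong-key-indexed set $\KKK_{\vec{X}}$ and the hypothesis applies. This rests on the facts that correct keys never corrupt any minterm and that $\KKK_{\vec{X}}$ and $\KKK^W$ range over wrong keys only. Once that is pinned down, the remainder is routine set-membership chasing between the pruned set and the DI-eligibility condition.
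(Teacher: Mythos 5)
Your proof is correct and follows essentially the same route as the paper's: both unpack Equation~\eqref{SAT_formula}, observe that the distinguishing condition forces at least one of $\vec{K}_\alpha,\vec{K}_\beta$ to be a wrong key corrupting $\vec{X}$, and then note that by the hypothesis this key also corrupts some DI in $\DDD^i$, so it cannot satisfy the consistency clauses for previously found DIs. Your write-up is merely a more explicit version of the paper's argument, including the worthwhile observation that a corrupting key is necessarily a wrong key.
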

The proof is given in Appendix~\ref{proof:lemma_removal}.

\begin{lemma}
For SAS Configuration 1, any critical minterm must exist in the set of DIs when SAT finishes: $\vec{X} \in \DDD^\lambda \  \forall \vec{X}\in \MMM$, where $\lambda$ is the total number of SAT iterations and $\DDD^\lambda$ is the set of all DIs.
\label{lemma:critical_count}
\end{lemma}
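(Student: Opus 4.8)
The plan is to prove the contrapositive at the level of individual wrong keys: for every critical minterm $\vec{X}\in\MMM$ I will exhibit a \emph{single} wrong key that corrupts $\vec{X}$ and no other input minterm. Since a SAT iteration prunes exactly those keys that corrupt the DI chosen in that iteration, such a key can only be eliminated by selecting $\vec{X}$ itself as a DI; as the attack terminates only after every wrong key has been pruned (this is precisely what renders Eq.~\eqref{SAT_formula} unsatisfiable, cf.\ \theoremref{thm:SAT}), $\vec{X}$ must appear in $\DDD^\lambda$. Running this for each of the $m$ critical minterms yields $\MMM\subseteq\DDD^\lambda$.

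The heart of the argument is the construction and verification of this singly-corrupting witness key, so I describe it first. Fix $\vec{X}\in\MMM$ and recall from the SAS block that $\vec{X}$ is corrupted by a key $\vec{K}=(\vec{K}_1,\vec{K}_2)$ precisely when $\vec{K}_1\in\KKK^1_{\vec{X}}$ (so that $H$ routes $\vec{X}$ onto the single on-set point of $g$) and $\vec{K}_2\neq\vec{X}\oplus\vec{X}_g$, whereas a non-critical minterm $\vec{Z}$ is corrupted only by the unique $\vec{K}_1=\vec{Z}\oplus\vec{X}_g$ paired with $\vec{K}_2\neq\vec{Z}\oplus\vec{X}_g$. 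I choose some $\vec{K}_1\in\KKK^1_{\vec{X}}$ with $\vec{K}_1\neq\vec{X}\oplus\vec{X}_g$ (available in the regime $m\le 2^{n-1}$ of interest, since then $|\KKK^1_{\vec{X}}|=2^n/m\ge 2$) and set $\vec{K}_2=\vec{K}_1$, giving the witness $\vec{K}^\star=(\vec{K}_1,\vec{K}_1)$. I then check that $\vec{K}^\star$ corrupts $\vec{X}$ but nothing else: other critical minterms $\vec{X}'\neq\vec{X}$ are excluded because the sets $\KKK^1_{\vec{X}'}$ are mutually disjoint by Eq.~\eqref{eq:wrong_key_partition}, whence $\vec{K}_1\notin\KKK^1_{\vec{X}'}$; and the only non-critical candidate a first half $\vec{K}_1$ can corrupt is its companion $\vec{W}=\vec{K}_1\oplus\vec{X}_g$, whose corruption additionally demands $\vec{K}_2\neq\vec{K}_1$ and is therefore spared by the choice $\vec{K}_2=\vec{K}_1$.

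Finally I convert the single-key fact into the DI statement. Because $\vec{K}^\star$ yields the correct output on every minterm other than $\vec{X}$, it stays consistent with the oracle across every DI distinct from $\vec{X}$; hence, as long as $\vec{X}$ has not been selected, $\vec{K}^\star$ survives in the search space while any correct key $\vec{K}_c$ also survives, and the two disagree on $\vec{X}$. Taking $\vec{X}_i=\vec{X}$, $\vec{K}_\alpha=\vec{K}^\star$, $\vec{K}_\beta=\vec{K}_c$ keeps Eq.~\eqref{SAT_formula} satisfiable, so the attack cannot yet have terminated; equivalently, by \lemmaref{lemma:removal}, $\KKK_{\vec{X}}$ is not contained in $\bigcup_{\vec{X}'\in\DDD^i}\KKK_{\vec{X}'}$ while $\vec{K}^\star$ is unpruned. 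Thus at termination $\vec{X}$ must already be a DI, i.e.\ $\vec{X}\in\DDD^\lambda$, and since $\vec{X}\in\MMM$ was arbitrary the lemma follows.

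The step I expect to be most delicate is the verification that $\vec{K}^\star$ corrupts \emph{only} $\vec{X}$: it hinges on the asymmetric roles of $\vec{K}_1$ and $\vec{K}_2$ in the $g$ versus $\bar{g}$ comparison and on the disjointness in Eq.~\eqref{eq:wrong_key_partition}. An arbitrary corrupting key of $\vec{X}$ would generically also corrupt the companion minterm $\vec{K}_1\oplus\vec{X}_g$, which could then be pruned without ever selecting $\vec{X}$; pinning down the exact $\vec{K}_2$ that breaks this collateral corruption is what makes the construction succeed, and is the one place where the gate-level semantics of the SAS block must be invoked carefully.
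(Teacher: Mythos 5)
Your overall strategy is exactly the paper's: exhibit, for each critical minterm $\vec{X}\in\MMM$, a wrong key that corrupts $\vec{X}$ and nothing else, then observe that such a key can only be pruned by selecting $\vec{X}$ itself as a DI, so SAT cannot terminate until $\vec{X}\in\DDD^\lambda$. The termination half of your argument is sound. The genuine gap is in your witness key. You read the corruption condition for a critical minterm as ``$\vec{K}_1\in\KKK^1_{\vec{X}}$ and $\vec{K}_2\ne\vec{X}\oplus\vec{X}_g$,'' \ie you let the $\bar g$ branch compare $\vec{K}_2$ against the \emph{raw} input $\vec{X}$. But $H$ is inserted before \emph{both} $g$ and $\bar g$: both branches see $\vec{X}'=H(\vec{X},\vec{K}_1)$. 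When $\vec{K}_1\in\KKK^1_{\vec{X}}$ and $H$ redirects $\vec{X}$ so that $g(\vec{X}'\oplus\vec{K}_1)=1$ (\ie $\vec{X}'=\vec{X}_g\oplus\vec{K}_1$), the other branch evaluates $\bar g(\vec{X}'\oplus\vec{K}_2)=\bar g(\vec{X}_g\oplus\vec{K}_1\oplus\vec{K}_2)$, which is $1$ exactly when $\vec{K}_2\ne\vec{K}_1$. So the condition on the second half of the key is $\vec{K}_2\ne\vec{K}_1$, the same condition that governs the companion non-critical minterm. Your witness $\vec{K}^\star=(\vec{K}_1,\vec{K}_1)$ therefore never fires the output AND gate on any input: it is a \emph{correct} key, not a wrong key, and the construction collapses. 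You can confirm your reading is inconsistent with the paper from its key counts: the paper states there are $2^n(2^n-1)$ wrong keys, \ie $2^n$ correct keys, which are precisely the diagonal keys $\vec{K}_1=\vec{K}_2$ of the underlying Anti-SAT structure; under your reading only $m$ correct keys would exist, and the IER of a critical minterm would not be exactly $1/m$.

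Moreover, the collateral-corruption problem you identified (correctly) as the crux cannot be fixed by tuning $\vec{K}_2$: for any $\vec{K}_1\in\KKK^1_{\vec{X}}$ with $\vec{K}_1\ne\vec{X}\oplus\vec{X}_g$ whose companion $\vec{K}_1\oplus\vec{X}_g$ is non-critical, corruption of $\vec{X}$ and corruption of that companion are triggered by the \emph{same} condition $\vec{K}_2\ne\vec{K}_1$, so every wrong key with such a $\vec{K}_1$ corrupts both, and every key sparing the companion also spares $\vec{X}$. The only singly-corrupting keys are exactly the ones the paper uses: set $\vec{K}_1=\vec{X}\oplus\vec{X}_g$ (which the design places inside $\KKK^1_{\vec{X}}$, as otherwise the IER of $\vec{X}$ would exceed $1/m$) paired with any $\vec{K}_2\ne\vec{K}_1$. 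With this choice the companion $\vec{K}_1\oplus\vec{X}_g$ is $\vec{X}$ itself, other critical minterms are excluded by the disjointness in Eq.~\eqref{eq:wrong_key_partition}, and no delicate choice of $\vec{K}_2$ is needed. Replacing your witness by this one repairs the proof and recovers the paper's argument.
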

The proof is given in Appendix~\ref{proof:lemma_critical_count}.

\begin{theorem}
The expected number of SAT iterations of SAS Configuration 1 is
\begin{equation}
E[\lambda] = \frac{2^n+m}{2}
\label{mean_case2}
\end{equation}
\end{theorem}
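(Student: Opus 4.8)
The plan is to compute $E[\lambda]$ by linearity of expectation over the individual input minterms. Since each SAT iteration produces one fresh distinguishing input and, by Lemma~\ref{lemma:removal}, a minterm can never reappear as a DI once all its corrupting keys have been pruned (and choosing a minterm as a DI prunes exactly its corrupting keys), the number of iterations equals the number of \emph{distinct} DIs, i.e. $\lambda=|\DDD^\lambda|=\sum_{\vec{X}\in\{0,1\}^n}\mathbf{1}[\vec{X}\in\DDD^\lambda]$. Hence $E[\lambda]=\sum_{\vec{X}}\Pr[\vec{X}\in\DDD^\lambda]$. The $m$ critical minterms each contribute $1$ by Lemma~\ref{lemma:critical_count}, so the whole theorem reduces to showing that \emph{every non-critical minterm is selected as a DI with probability exactly $\tfrac12$}; this yields $E[\lambda]=m+\tfrac12(2^n-m)=\tfrac{2^n+m}{2}$.

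First I would nail down the combinatorial relation between minterms and their corrupting keys, reading it off the SAS block. The wrong keys are exactly the pairs $(\vec{K}_1,\vec{K}_2)$ with $\vec{K}_1\neq\vec{K}_2$. A non-critical minterm $\vec{X}$ is corrupted precisely by the $2^n-1$ keys with $\vec{K}_1=\vec{X}\oplus\vec{X}_g$, whereas a critical minterm $\vec{X}^*$ is corrupted by every wrong key whose first half lies in $\KKK^1_{\vec{X}^*}$ (this is what makes $\gamma_c=\tfrac1m$). Because the sets $\{\KKK^1_{\vec{X}}\}_{\vec{X}\in\MMM}$ partition $\{0,1\}^n$ by Equation~\eqref{eq:wrong_key_partition}, each non-critical $\vec{X}$ has a unique associated critical minterm $\vec{X}^*$, namely the one with $\vec{X}\oplus\vec{X}_g\in\KKK^1_{\vec{X}^*}$. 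The two facts I want to extract are: (i) $\KKK_{\vec{X}}\subseteq\KKK_{\vec{X}^*}$, and (ii) every key in $\KKK_{\vec{X}}$ corrupts \emph{exactly} the two minterms $\vec{X}$ and $\vec{X}^*$. Fact (i) is immediate since every key of $\KKK_{\vec{X}}$ has $\vec{K}_1=\vec{X}\oplus\vec{X}_g\in\KKK^1_{\vec{X}^*}$; fact (ii) follows because such a key can corrupt only the non-critical minterm $\vec{K}_1\oplus\vec{X}_g=\vec{X}$ and only the critical minterm whose $\KKK^1$-block contains $\vec{K}_1$, i.e. $\vec{X}^*$.

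With this structure in hand, the symmetry argument delivers the $\tfrac12$. By Lemma~\ref{lemma:removal}, $\vec{X}$ remains an eligible DI exactly as long as some key in $\KKK_{\vec{X}}$ survives. By fact (ii) a key of $\KKK_{\vec{X}}$ is pruned only when $\vec{X}$ or $\vec{X}^*$ is chosen, and by fact (i) choosing $\vec{X}^*$ prunes all of $\KKK_{\vec{X}}$ at once. Consequently $\vec{X}\in\DDD^\lambda$ if and only if it is selected \emph{before} $\vec{X}^*$: until one of the two is picked both remain eligible, and the instant $\vec{X}^*$ is picked $\vec{X}$ is killed. Since each iteration picks an eligible DI uniformly at random, I condition on the eligible set $V$ at the first step at which one of $\{\vec{X},\vec{X}^*\}$ is chosen; both belong to $V$, so each is selected with the same conditional probability $1/|V|$, giving $\Pr[\vec{X}\text{ before }\vec{X}^*]=\tfrac12$ (this first step occurs almost surely because $\vec{X}^*$ is chosen with probability $1$ by Lemma~\ref{lemma:critical_count}). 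Summing over all minterms then gives
\begin{equation}
E[\lambda]=\sum_{\vec{X}\in\MMM}1+\sum_{\vec{X}\notin\MMM}\frac12=m+\frac{2^n-m}{2}=\frac{2^n+m}{2}.
\end{equation}

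The step needing the most care is fact (ii) together with the claim that no \emph{third} minterm can prune $\vec{X}$'s keys: the clean $\tfrac12$ hinges entirely on each corrupting key of $\vec{X}$ touching only the pair $\{\vec{X},\vec{X}^*\}$, so that the race between them is a genuinely two-way symmetric contest among the uniformly chosen DIs. Verifying this rigorously from the gate-level definitions of $H$, $g$, and $\bar g$ — in particular checking that the $\vec{K}_2$/$\bar g$ side never produces a key corrupting $\vec{X}$ but not $\vec{X}^*$ (an ``exclusive'' key would force $\vec{X}$ to be a DI with probability $1$ and break the count) — is the main obstacle, and is where I would concentrate the effort; once the pairing is established the probabilistic step is routine.
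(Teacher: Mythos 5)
Your proof is correct and follows essentially the same route as the paper's: all $m$ critical minterms are forced into the DI set by Lemma~\ref{lemma:critical_count}, and each non-critical minterm enters the DI set with probability $\tfrac{1}{2}$ because its corrupting keys are contained in those of a unique critical minterm (Lemma~\ref{lemma:removal} plus uniform random DI selection), giving $m+\tfrac{1}{2}(2^n-m)=\tfrac{2^n+m}{2}$. The only difference is that you make explicit two facts the paper leaves implicit --- that each corrupting key of a non-critical minterm $\vec{X}$ corrupts exactly the pair $\{\vec{X},\vec{X}^*\}$ (so no third minterm can end the race), and the conditioning argument that converts uniform DI selection into the exact $\tfrac{1}{2}$ --- which strengthens rather than changes the paper's argument.
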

\begin{proof}
The total number of SAT iterations equals the total number of DIs. DIs consist of critical minterms and non-critical minterms. By Lemma~\ref{lemma:critical_count}, all the critical minterms must be in the set of DIs for SAT to terminate.
Therefore, we only need to find the expected number of \emph{non-critical minterms} that are chosen as DIs.
As illustrated in Table~\ref{case1table},
$\forall \vec{X'}\notin\MMM$, $\exists$ exactly one $\vec{X}\in\MMM$ such that $\KKK_{\vec{X'}}\subset\KKK_{\vec{X}}$.
By Lemma~\ref{lemma:removal}, if this $\vec{X}$ is chosen as DI before $\vec{X'}$, then $\vec{X'}$ cannot be chosen in further iterations any more.
In other words, $\vec{X'}$ will count towards the total number of iterations only when it is chosen before the critical minterm $\vec{X}$.
By our assumption that the DI is chosen uniformly at random in each iteration, $\vec{X'}$ has a probability of $\frac{1}{2}$ to be chosen as DI before $\vec{X}$ is chosen.
As this is true for any non-critical minterm, the expected number of SAT iterations is $E[\lambda]=\frac{1}{2}(2^n-m)+m=\frac{2^n+m}{2}$.
\end{proof}

\subsection{Configuration 2: SAS with Multiple Blocks}

In this configuration, we have $l$ SAS blocks as illustrated in Fig.~\ref{SELL_3}.
The $n$-bit primary input $\vec{X}$ is shared among all the SAS blocks and there is a $2n$-bit key input for each SAS block.
The output of each SAS block is XOR'ed with a wire in the original circuit.
Therefore, a fault is injected into the original circuit if any SAS block has output 1.
Let $\MMM^j$ be the set of critical minterms for the $j^\text{th}$ SAS block
, $j=1,2,\ldots,l$.
For ease of implementation, we choose $l$ also to be a power of 2 and $l\le m$.
The relationship between $\MMM^j$ and the total set of critical minterms $\MMM$ is that
\emph{$\MMM^1,\MMM^2,\ldots,\MMM^l$ have the same cardinality, are mutually exclusive, and evenly partition $\MMM$,} \ie
\begin{equation}
    |\MMM^1|=|\MMM^2|=\cdots=|\MMM^l|,\ \MMM^i \cap \MMM^j = \emptyset\ \forall i\ne j, \text{ and } \bigcup_{k=1}^{l}\MMM^k=\MMM
\end{equation}
In this way, each SAS block has $\frac{m}{l}$ critical minterms. As each critical minterm receives high IER from only one SAS block, the IER of any critical minterm is $\gamma_c=\frac{l}{m}$.
\begin{figure}[h]
\centering
\includegraphics[width=0.7\textwidth, trim={1 1 1 1}, clip]{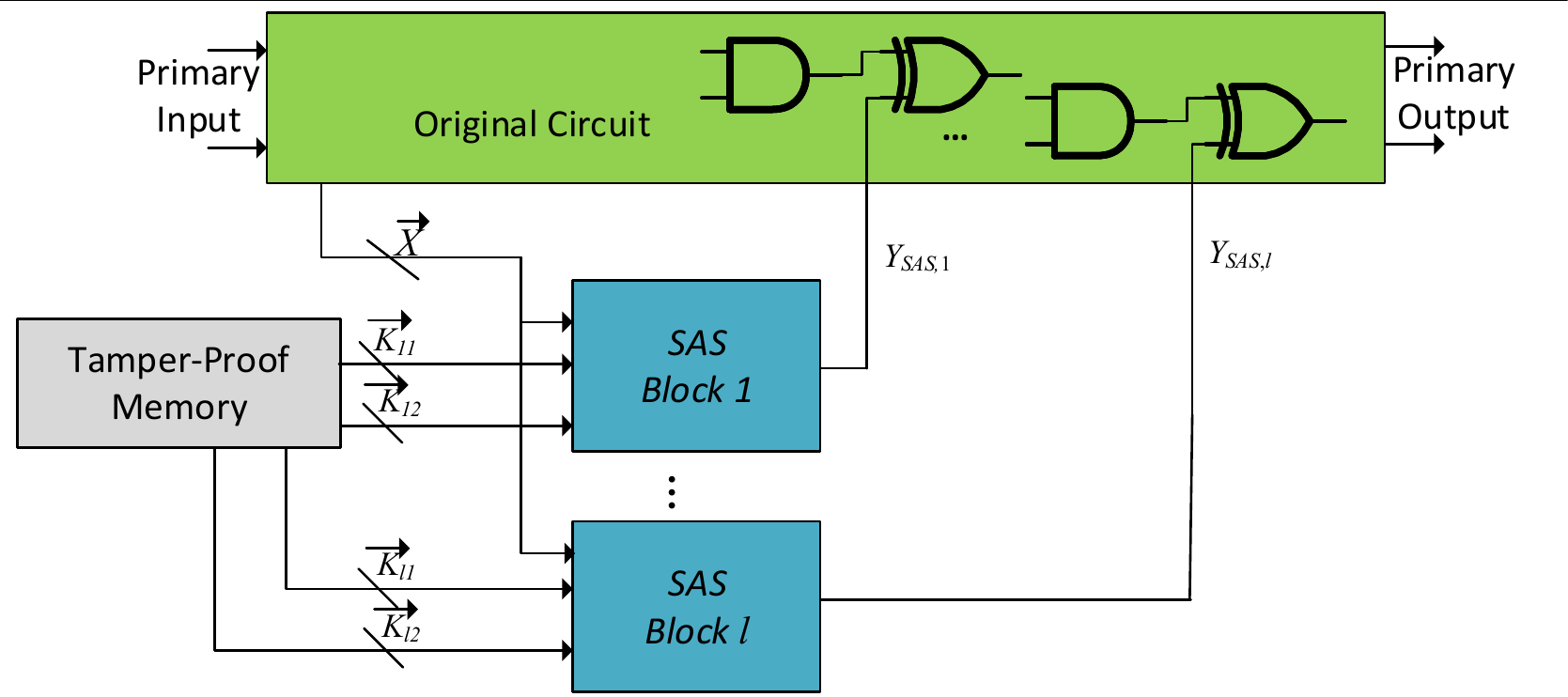}
\caption{Configuration 2 with $l$ SAS blocks}
\label{SELL_3}
\end{figure}

\begin{lemma}
For SAS Configuration 2, any critical minterm must exist in the set of DIs when SAT finishes: $\vec{X} \in \DDD^\lambda \  \forall \vec{X}\in \MMM$, where $\lambda$ is the total number of SAT iterations and $\DDD^\lambda$ is the set of all DIs.
\label{lemma:SAS_critical_count}
\end{lemma}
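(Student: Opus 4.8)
The plan is to reduce the multi-block setting to the single-block argument already used for Configuration~1. Concretely, for every critical minterm $\vec{X}\in\MMM$ I would exhibit a wrong key that corrupts $\vec{X}$ and \emph{no other} minterm; once such a key exists, \lemmaref{lemma:removal} forces $\vec{X}$ to appear among the DIs exactly as in \lemmaref{lemma:critical_count}. The genuinely new feature of Configuration~2 is that a fault is injected whenever \emph{any} of the $l$ blocks fires, so the work is to make a single block fire on $\vec{X}$ alone while keeping all other blocks silent.

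First I would use that $\MMM^1,\dots,\MMM^l$ partition $\MMM$: each critical minterm $\vec{X}$ is critical for exactly one block, say block $j^*$, and is merely a non-critical minterm for every other block. Viewed in isolation, block $j^*$ is a single SAS block with $m/l$ critical minterms (its set $\MMM^{j^*}$), all remaining $2^n-m/l$ minterms being treated as non-critical; hence the incidence structure behind \lemmaref{lemma:critical_count} applies to it verbatim, with $m/l$ in place of $m$. In particular, the same construction underlying \lemmaref{lemma:critical_count} supplies a key $\vec{K}^{j^*}=(\vec{K}^{j^*}_1,\vec{K}^{j^*}_2)$ for block $j^*$ whose set of corrupted minterms is exactly $\{\vec{X}\}$.

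Next I would assemble a full $2nl$-bit key $\vec{K}^*$ by giving block $j^*$ the key $\vec{K}^{j^*}$ above and giving every other block $j\ne j^*$ a correct key, i.e.\ $\vec{K}^{j}_1=\vec{K}^{j}_2$. Since $g$ and $\bar g$ are complementary, equal key halves make the Anti-SAT structure of block $j$ output $0$ on all inputs, independently of its $H$ stage, so block $j$ injects no fault. Thus $\vec{K}^*$ is a wrong key (block $j^*$ does fire), and because only block $j^*$ can ever fire and it fires only on $\vec{X}$, the locked circuit under $\vec{K}^*$ differs from the original circuit exactly on input $\vec{X}$. Hence $\vec{K}^*\in\KKK_{\vec{X}}$ while $\vec{K}^*\notin\KKK_{\vec{Z}}$ for every $\vec{Z}\ne\vec{X}$.

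Finally I would conclude: a DI $\vec{D}$ prunes $\vec{K}^*$ only if $\vec{K}^*$ corrupts $\vec{D}$, which happens only for $\vec{D}=\vec{X}$; and SAT terminates only after every wrong key, $\vec{K}^*$ included, has been pruned. By \lemmaref{lemma:removal}, $\vec{X}$ can drop out of the pool of DI candidates only once all of $\KKK_{\vec{X}}$ has been pruned, so $\vec{K}^*$ cannot be removed unless $\vec{X}$ is itself chosen as a DI; therefore $\vec{X}\in\DDD^\lambda$. I expect the main obstacle to lie in the silencing step: one must argue that the ``fault if any block fires'' semantics creates no alternative pruning route for $\vec{K}^*$ through the other $l-1$ blocks, which is precisely what setting their key halves equal guarantees. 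The existence of a block-$j^*$ key corrupting $\vec{X}$ and nothing else is the one fact imported from the single-block analysis, and isolating block $j^*$ cleanly is what makes that import legitimate.
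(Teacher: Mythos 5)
Your proof is correct and takes essentially the same route as the paper's: the paper likewise fixes the unique block $j$ with $\vec{X}\in\MMM^j$, gives that block the single-block wrong key $(\vec{X}\oplus\vec{X}_g,\vec{K}^j_2)$ imported from \lemmaref{lemma:critical_count}, gives every other block a correct key, and concludes that the assembled key is a wrong key corrupting only $\vec{X}$, which forces $\vec{X}$ into the DI set before termination. Your only addition is the explicit observation that setting $\vec{K}^j_1=\vec{K}^j_2$ silences the other blocks, a detail the paper leaves implicit by writing $\vec{K}^i\in\KKK^C_i$.
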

The proof is given in Appendix~\ref{proof:lemma_sas_crit}.
Below, we will analyze the SAT resilience of this configuration by deriving the expected number of SAT iterations.

\begin{theorem}
The expected number of SAT iterations of SAS Configuration 2 with $l$ SAS blocks and $m$ critical minterms is
\begin{equation}
    E[\lambda] = \frac{l\cdot 2^n+m}{l+1}
\label{mean_case4}
\end{equation}
\end{theorem}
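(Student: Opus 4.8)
The plan is to mirror the proof of the Configuration~1 theorem (Equation~\eqref{mean_case2}): the total number of SAT iterations equals the total number of DIs, which I decompose into the critical minterms and the non-critical minterms. By Lemma~\ref{lemma:SAS_critical_count} every critical minterm lies in $\DDD^\lambda$, contributing exactly $m$ DIs, so the only quantity left to compute is the expected number of non-critical minterms that are chosen as DIs. By linearity of expectation it suffices to find, for a single non-critical minterm $\vec{X'}$, the probability that it is ever selected.

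First I would set up a per-block domination structure. Fix a non-critical minterm $\vec{X'}$. In each block $j$, the unique value of $\vec{K}_1^{(j)}$ that lets block $j$ fire on $\vec{X'}$ lies in exactly one cell of that block's partition of $\{0,1\}^n$, namely $\KKK^1_{\vec{X}^{(j)}}$ for a uniquely determined critical minterm $\vec{X}^{(j)}\in\MMM^j$. Since $\MMM^1,\dots,\MMM^l$ are mutually exclusive, the minterms $\vec{X}^{(1)},\dots,\vec{X}^{(l)}$ are distinct from one another and from $\vec{X'}$, giving $l+1$ distinct minterms. The claim I must establish is that $\vec{X'}$ becomes ineligible as a DI, \ie $\KKK_{\vec{X'}}\subseteq\bigcup_{\vec{X}\in\DDD^i}\KKK_{\vec{X}}$ in the sense of Lemma~\ref{lemma:removal}, \emph{exactly} when all $l$ of these critical minterms have already been chosen. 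For the ``if'' direction I would take an arbitrary wrong key corrupting $\vec{X'}$, observe that it fires some block $j$ on $\vec{X'}$, and verify through the $H$/XOR structure that the very same key also fires block $j$ on $\vec{X}^{(j)}$ (the $g$-side fires because $\vec{K}_1^{(j)}\in\KKK^1_{\vec{X}^{(j)}}$ forces the $H$-output to $\vec{X}_g$, and the $\bar g$-side fires because $\vec{K}_1^{(j)}\ne\vec{K}_2^{(j)}$), so that key is absorbed into $\KKK_{\vec{X}^{(j)}}$. For the ``only if'' direction, whenever some $\vec{X}^{(j_0)}\notin\DDD^i$ I would exhibit a single-block wrong key that sets $\vec{K}_1^{(j_0)}$ to the firing value, lets every other block behave correctly, and hence corrupts only $\vec{X'}$ and $\vec{X}^{(j_0)}$; since neither of these is in $\DDD^i$, this key witnesses that $\KKK_{\vec{X'}}$ is not yet covered.

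Given this covering characterization, the probability computation is immediate. Under the assumption that each DI is drawn uniformly at random (the same assumption used in Configuration~1), $\vec{X'}$ is selected as a DI if and only if it is chosen before the \emph{last} of its $l$ dominating critical minterms, \ie if and only if $\vec{X'}$ is not the final element among the $l+1$ distinct minterms $\{\vec{X'},\vec{X}^{(1)},\dots,\vec{X}^{(l)}\}$ in the random selection order. That probability is $1-\tfrac{1}{l+1}=\tfrac{l}{l+1}$. Summing over the $2^n-m$ non-critical minterms and adding the $m$ guaranteed critical DIs yields
\begin{equation}
E[\lambda] = m + \frac{l}{l+1}\,(2^n-m) = \frac{l\cdot 2^n + m}{l+1},
\end{equation}
which correctly reduces to the Configuration~1 value $\tfrac{2^n+m}{2}$ when $l=1$.

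I expect the main obstacle to be the ``only if'' direction of the covering claim: I must argue that no combination of previously chosen DIs---including other non-critical minterms, not merely the critical ones---can cover $\KKK_{\vec{X'}}$ until all $l$ dominating critical minterms have appeared. The single-block key construction handles exactly this, but it hinges on verifying that such a key corrupts no minterm outside $\{\vec{X'},\vec{X}^{(j_0)}\}$, which in turn relies on the disjointness of the per-block partitions of the $\vec{K}_1$ space and the on-set-size-one property of $g$. Pinning down these structural details is the delicate step, whereas the concluding combinatorial identity and the $\tfrac{l}{l+1}$ order statistic are routine.
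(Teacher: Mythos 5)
Your proposal is correct and follows essentially the same route as the paper's proof: Lemma~\ref{lemma:SAS_critical_count} accounts for the $m$ critical minterms, each non-critical minterm $\vec{X'}$ is dominated by exactly one critical minterm per block ($l$ in total, distinct because the $\MMM^j$ are mutually exclusive), and the uniform-random-DI assumption gives selection probability $\frac{l}{l+1}$, yielding $E[\lambda]=m+\frac{l}{l+1}(2^n-m)=\frac{l\cdot 2^n+m}{l+1}$. The only difference is that you explicitly verify both directions of the covering characterization (the paper asserts the covering and the resulting probability without spelling out the ``only if'' single-block key construction), which makes your write-up somewhat more rigorous on that point but does not change the argument.
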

\begin{proof}
By Lemma~\ref{lemma:SAS_critical_count}, every critical minterm must count toward the total number of SAT iterations.
Therefore, we only need to derive the expected number of non-critical minterms that are chosen as DIs.

For any non-critical minterm $\vec{X'}\notin\MMM$, in the $i^\text{th}$ SAS  block, there exists exactly one critical minterm $\vec{X}_i$ such that the set of wrong keys that corrupt $\vec{X'}$ in this SAS block, $\KKK_{i,\vec{X'}}$, is a subset of the set of wrong keys that corrupt $\vec{X}_i$, $\KKK_{i,\vec{X}_i}$. \ie $\KKK_{i,\vec{X'}} \subset \KKK_{i,\vec{X}_i}$.
As the construction of the SAS block makes this true for any individual SAS block and the critical minterms for each SAS block are mutually exclusive, there are a total of $l$ such critical minterms.
When {\it all of these $l$ critical minterms} are chosen as DI, they will cover the entire set of wrong keys that corrupt $\vec{X'}$. Therefore, by Lemma~\ref{lemma:removal}, in order to include $\vec{X'}$ in the set of DIs, it must be selected before all $l$ critical minterms are selected. This holds for any non-critical minterm.

By our assumption that the DIs are chosen uniformly at random in each SAT iteration, the probability that each non-critical minterm will be chosen as DI is $\frac{l}{l+1}$.
Therefore, the expected number of SAT iterations is $E[\lambda]=\frac{l}{l+1}(2^n-m)+m=\frac{l\cdot 2^n+m}{l+1}$.
\end{proof}

The properties of both configurations of SAS are summarized in Table~\ref{tab:configs}. 

\begin{table}[h]
\centering
\scriptsize
\caption{Properties of the 2 Configurations of SAS}
\begin{tabular}{|c| c| c| c|}
\hline
&&&\\ [-2ex]
Configuration & $l$ & $\gamma_c$ & $E[\lambda]$ \\[0.5ex] \hline
&&&\\ [-2ex]
1 & 1 & $\frac{1}{m}$ & $\frac{2^n+m}{2}$ \\ [1ex]  \hline
&&&\\ [-2ex]
2 & $1\leq l \leq m$ & $\frac{l}{m}$ & $\frac{l2^n+m}{l+1}$ \\ [1ex]  \hline
\end{tabular}
\label{tab:configs}
\end{table}

\section{Robust SAS: a Removal-Resilient SAS Variant}
\label{sec:RSAS}
Although SAS achieves desirable SAT resilience and high IER on critical minterms, it is still vulnerable to removal attack. In such an attack, the attacker can identify and remove each SAS block and replace their output wires with constant 0. In this way, the remaining part of the locked circuit will have correct functionality.
In order to address this drawback, we introduce Robust SAS (RSAS), a variant of SAS that is resilient to removal attacks. In addition to adding an RSAS function block, RSAS modifies the functionality of the original circuit. Therefore, unlike SAS, one cannot obtain the correct functionality of the circuit by identifying and removing the RSAS block.
We will introduce the architecture of RSAS and show how any SAS configuration can be converted to a functionally equivalent RSAS configuration.
Due to the equivalence in functionality, an RSAS configuration will have the same \textbf{SAT resilience} and \textbf{effectiveness} as its SAS counterpart.

\subsection{RSAS Architecture and Relationship with SAS}
\label{ssec:sas_to_rsas}
A circuit locked by RSAS consists of an altered original circuit and one or more RSAS block(s). Fig. \ref{fig:RSAS_1} illustrates the RSAS configuration with one RSAS block. Given the same set of critical minterms and the same number of locking function blocks, locking a circuit with RSAS and SAS will yield the same functionality.
An RSAS-locked circuit can be obtained by converting a functionally equivalent SAS-locked circuit in the following way.

\begin{figure}[htb]
    \centering
    \includegraphics[width=.7\textwidth, trim={1 1 1 1}, clip]{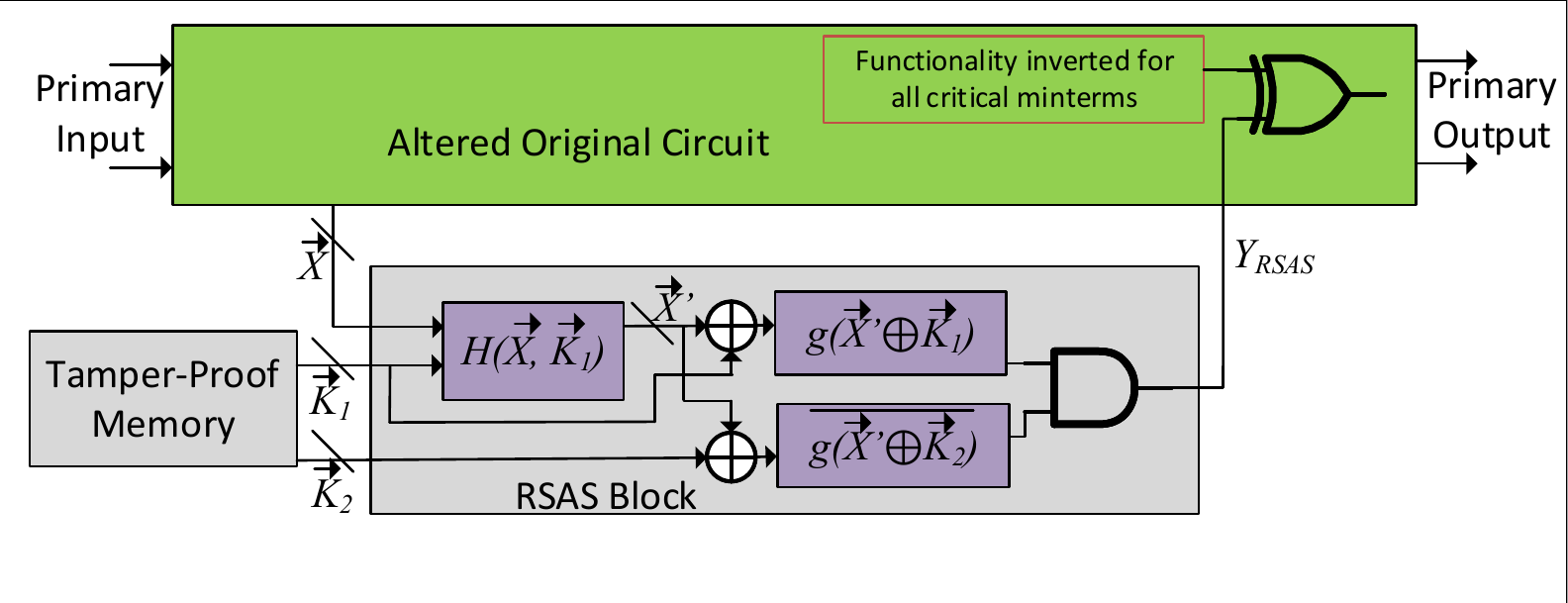}
    \caption{A circuit locked with one RSAS block, equivalent to SAS Configuration 1}
    \label{fig:RSAS_1}
\end{figure}

\subsubsection{Altering the original circuit}
Recall that $l$ is the number of SAS blocks in a SAS configuration. For the $j^\text{th}$ SAS block, $j=1,2,\ldots,l$, the set of critical minterms it contains is denoted by $\MMM^j$ and $|\MMM^j|=\frac{m}{l}$, where $m$ is the total number of critical minterms. 
In order to implement RSAS, we need to modify the original circuit's functionality. Notice that, for each SAS block, there is a wire in the original circuit that is XOR'ed with the SAS block's output. 
For the $j^\text{th}$ SAS block, we locate this wire. For each critical minterm in $\MMM^j$, we invert the functionality of critical minterms at this wire. This needs to be done for each $j$ in $j=1,2,\ldots,l$.
This is illustrated in Fig. \ref{fig:rsas_multi}.

\begin{figure}[htb]
    \centering
    \includegraphics[width=0.7\textwidth, trim={1 1 1 1}, clip]{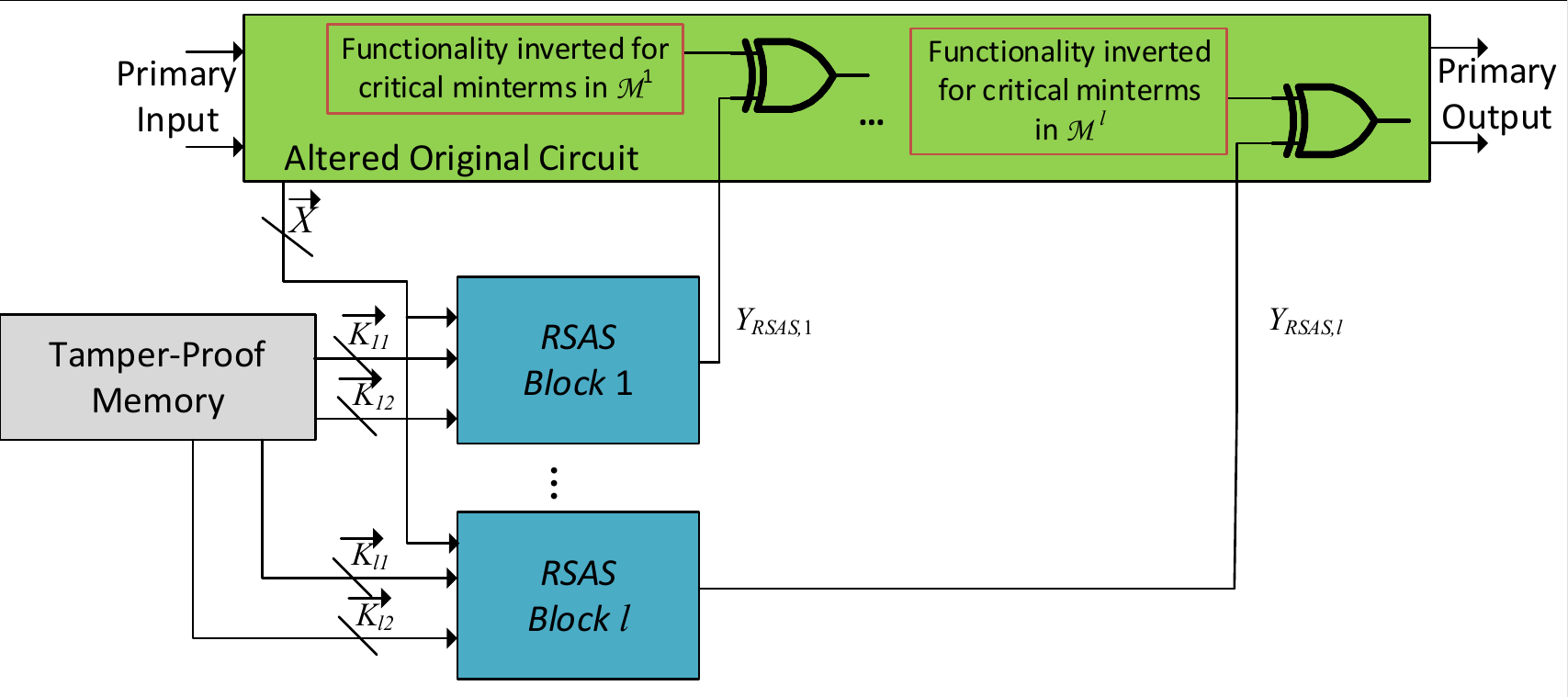}
    \caption{A circuit locked with multiple RSAS blocks, equivalent to SAS Configuration 2}
    \label{fig:rsas_multi}
\end{figure}

\subsubsection{Converting the SAS block into the RSAS block}
The RSAS block is very similar to the SAS block and there is only one difference between them. 
For the $j^\text{th}$ SAS block, $j=1,2,\ldots,l$, if the primary input is a critical minterm in $\MMM^j$, the output of RSAS block, $Y_{RSAS,j}$, is the inversion of the output of SAS block, $Y_{SAS,j}$. 
Recall that, for a SAS configuration with $m$ critical minterms and $l$ SAS blocks, each critical minterm's IER is $\gamma_c=\frac{l}{m}$. Hence for a portion of $\frac{l}{m}$ wrong keys, $Y_{SAS,j}$ is 1.
This is achieved by the $\vec{X}'=H(\vec{X},\vec{K}_1)$ function: if $\vec{X}$ is a critical minterm, then the $H(\vec{X},\vec{K}_1)$ function makes sure that for $\gamma_c$ portion of wrong keys, we will have $g(\vec{X}'\oplus\vec{K}_1)=1$.
For RSAS, since the functionality for critical minterms is inverted, the portion of wrong keys that makes $Y_{RSAS,j}$ be 1 is $1-\gamma_c=\frac{m-l}{m}$.
This means the functionality of $H(\vec{X},\vec{K}_1)$ needs to be modified in the following way: if $\vec{X}$ is a critical minterm, then for $1-\gamma_c$ portion of wrong keys, $g(\vec{X}'\oplus\vec{K}_1)$ will output 1.
For non-critical input minterms, $Y_{RSAS}$ behaves in the same as $Y_{SAS}$. 
This is illustrated in Table \ref{tab:rsas}.

\begin{table}[htb]
\centering
\scriptsize
\caption{Illustration of RSAS block's functionality. A `$\bullet$' stands for $Y_{RSAS}=1$.}
\begin{tabular}{p{0.8cm} p{0.4cm}| >{\centering}p{0.3cm} >{\centering}p{0.3cm} >{\centering}p{0.3cm}| >{\centering}p{0.35cm} >{\centering}p{0.35cm} >{\centering}p{0.35cm}| >{\centering}p{0.2cm} p{0.2cm}}
\multicolumn{2}{c|}{$\vec{K}_1$ of wrong keys}& $\vec{k}_1$ & $\cdots$ & $\vec{k}_{\frac{2^n}{m}}$ & $\vec{k}_{\frac{2^n}{m}+1}$ & $\cdots$ & $\vec{k}_{2\frac{2^n}{m}}$ & $\cdots$ &$\vec{k}_{2^n}$  \\ \hline
\multirow{4}{*}{\parbox{1cm}{critical minterms}}
&$\vec{X}_1$ &  &  &  & $\bullet$ & $\bullet$ & $\bullet$ & $\bullet$ & $\bullet$ \\
&$\vec{X}_2$ & $\bullet$ & $\bullet$ & $\bullet$ & & & & $\bullet$ & $\bullet$\\
&$\cdots$ &$\bullet$ & $\bullet$ & $\bullet$&$\bullet$ & $\bullet$ & $\bullet$ & $\bullet$ & $\bullet$  \\
&$\vec{X}_m$ & $\bullet$ & $\bullet$ & $\bullet$ & $\bullet$ & $\bullet$ & $\bullet$ & & \\\hline
\multirow{4}{*}{\parbox{1cm}{non-critical minterms} }
&$\vec{X}_{m+1}$ & $\bullet$ &&&&  &&& \\
&$\vec{X}_{m+2}$ && $\bullet$ &&&&  && \\
&$\cdots$ &&&& & $\ddots$ & & & \\
&$\vec{X}_{2^n}$ &&&&&&&& $\bullet$ \\
\end{tabular}
\label{tab:rsas}
\end{table}

\subsection{SAT Resilience and Effectiveness of RSAS}
In Sec. \ref{ssec:sas_to_rsas}, we introduced how to convert a SAS-locked circuit into an equivalent RSAS-locked circuit.
These steps essentially invert the functionality of each critical minterm at two places: the first at the wire in the original circuit where RSAS is integrated, and the other at the RSAS block's output. Since these two wires are XOR'ed, the two inversions will cancel out which makes the RSAS-locked circuit functionally equivalent to the SAS-locked circuit.
Due to the equivalence in functionality, the derivations of SAS's \textit{SAT resilience} and \textit{effectiveness} will also hold for RSAS. Therefore, Table \ref{tab:configs} is also the summary of these properties of RSAS.

\section{Choosing Critical Minterms}
\label{sec:critical_minterms}

The critical minterms for injecting large errors should be selected judiciously. 
A careful analysis of the workload would help identify these typical minterms. Generally these minterms would be very few as compared to the overall input space of the functional modules. 
Here we describe how to select the critical minterms.
As mentioned in Sec.~\ref{sec:attacks}, we use PARSEC (generic) and ML models as application benchmarks. 
For the PARSEC benchmarks, we arbitrarily choose critical minterms from the input minterms that exist in all the application benchmarks.
We take a similar approach for ML benchmarks. 
A significant part of an ML-based application is the parameters of the ML model and it turns out that the parameter values of most ML models follow a similar distribution.
For example, Figure~\ref{fig:weights} shows the distribution of parameters of the LeNet (MNIST dataset) and CaffeNet (ISLVRC-2012 dataset) models. These two are the smallest benchmark and the largest benchmark, respectively.
The parameter distributions are similar across ML benchmarks and many other ML models.
This kind of similarity can be also found among generic applications.

We select a subset of parameter values to be critical minterms based on their application-level impact. The selected critical minterms should cause significant application-level error. Fig.~\ref{fig:weights} also shows the accuracy loss of the ML model in the following experiment: for each input minterm, we measure the accuracy loss of the ML model when every computation involving {this very minterm} is corrupted while {no other} minterm is corrupted.
\begin{figure}[h]
    \centering
    \includegraphics[width=0.49\columnwidth, trim={0cm, 0cm, 0cm, 0cm}, clip]{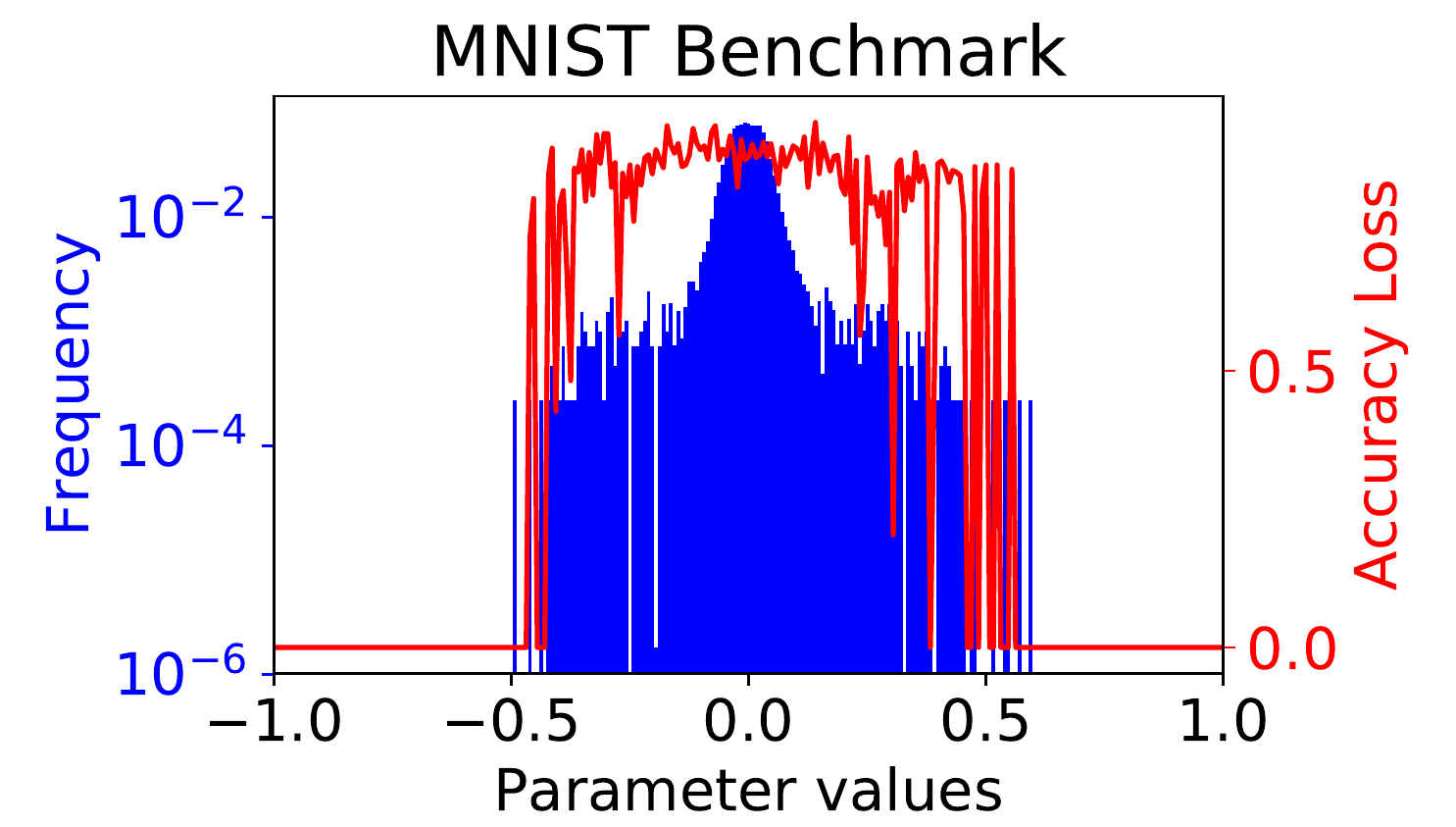}
    \includegraphics[width=0.47\columnwidth, trim={0cm, 0cm, 0cm, 0cm}, clip]{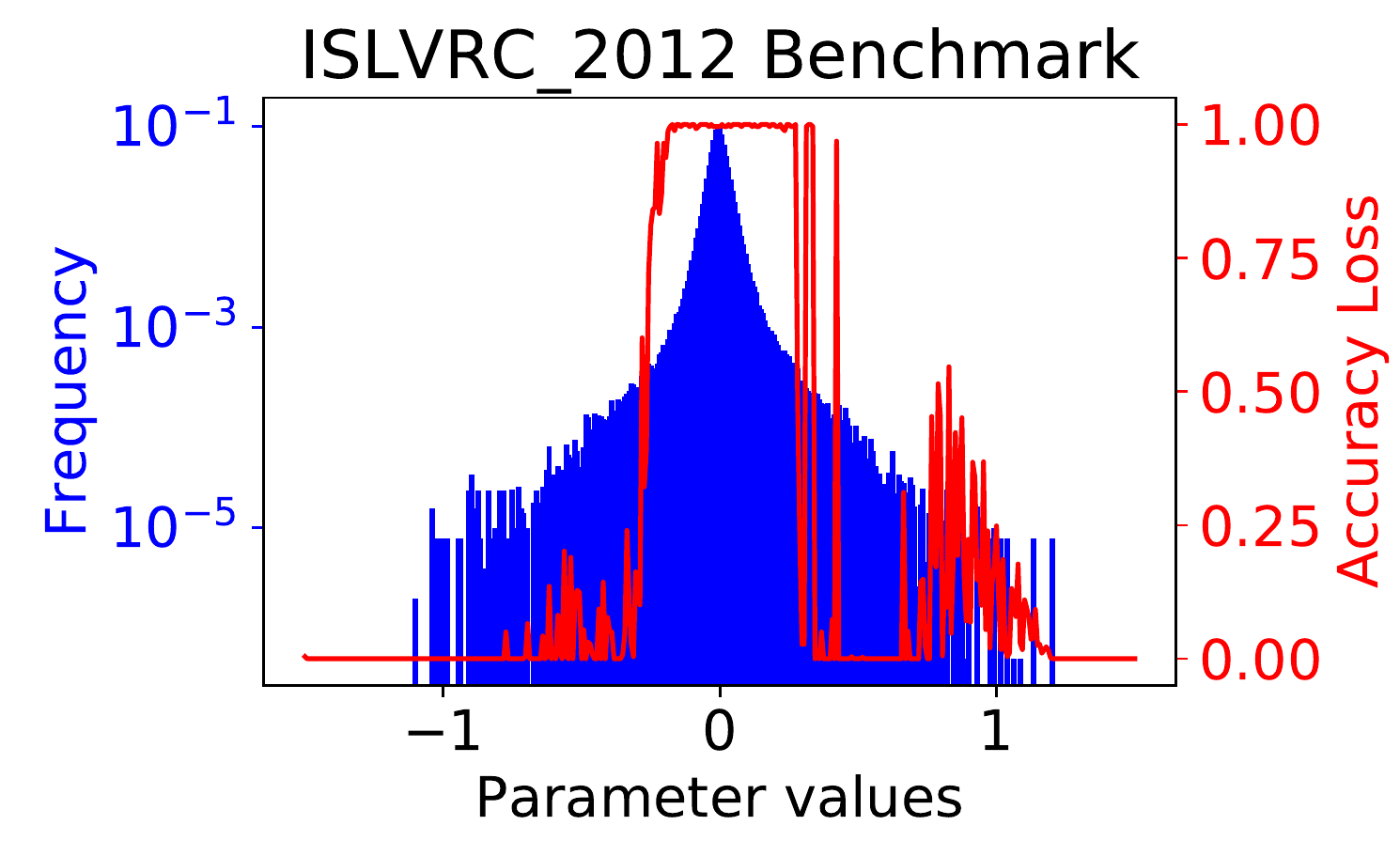}
    \caption{Weight Distribution (Blue Histogram, Left Y Axis) and Application-Level Accuracy Loss (Red Line, Right Y Axis) of LeNet and CaffeNet when the corresponding inputs were locked}
    \label{fig:weights}
\end{figure}
As the input minterm distributions are similar among the same type of applications, the flexibility of SAS/RSAS allows the designer to choose a configuration and a combination of critical minterms that work well in securing the intended applications without compromising SAT resilience. 

{It should be assumed that the application programs (written in binary code) are public knowledge according to the Kerckhoff's principle. An attacker may be tempted to take advantage of the above-mentioned strategy to identify the critical minterms. Since the binary code does not contain the locked module's input values directly, the attacker needs to observe the inputs of the locked module from the scan chain. There are scan chain obfuscation techniques which will corrupt the scan output unless the correct scan chain key is provided \cite{rahman2019dynamically, zhang2017dynamically, karmakar2020securing}. Alternatively, the designer can develop software-based defenses to detect malicious scan chain access or simply burn the scan chain before deploying the activated chips to the open market. Therefore, we do not consider this kind of attack as an immediate security threat to our technique.}

\section{Experiments \& Comparison with SFLL}\label{Sec_Exp}
This section shows the experimental results of SAS and RSAS as well as the comparison with SFLL. 
Recall that, as illustrated in Fig.~\ref{fig:error_transfer}, we obtain the gate-level netlists of the multiplier within a 32-bit 80386 processor by synthesizing the high-level description using Cadence RTL Compiler.
Then we lock the netlist using various SAS and RSAS configurations and SFLL-flex with the same set of critical minterms. 
Note that the critical minterms are selected according to the method described in Sec. \ref{sec:critical_minterms}.
The architecture-level simulation is conducted by a modified GEM5~\cite{binkert2011gem5} simulator where error is injected into the locked processor module according to the hardware error profile due to the wrong key.
We conduct the following experiment to verify the SAT resilience and effectiveness of SAS and RSAS and compare them with SFLL.

\subsection{SAT Resilience}
We first verify whether the SAT resilience of SAS/RSAS (\ie the actual number of SAT iterations) matches what we have derived in Sec. \ref{sec:countermeasure}. The SAT resilience of SAS/RSAS and SFLL is also compared.
We lock the multiplier in a 32-bit 80386 processor with SAS and RSAS as well as SFLL.
Fig. \ref{fig:rsas_iters} shows the actual and expected number of SAT iterations of multipliers locked with SAS and RSAS.  These numbers are compared to the actual number of iterations of SFLL. In these locking configurations, we use 14 bits of primary input for locking purposes ($n=14$) and experiment with each feasible configuration with up to 4 critical minterms. We can observe that SAS and RSAS have similar numbers of actual SAT iterations and they are both close to the expected value. When there is more than one critical minterms, SAS and RSAS exhibit higher SAT resilience than SFLL. 
This is because the corruptibility of each critical minterm in SFLL is almost 1 no matter how many critical minterms there are. This compromises its SAT resilience.
\begin{figure}[htb]
    \centering
    \includegraphics[width=.8\textwidth]{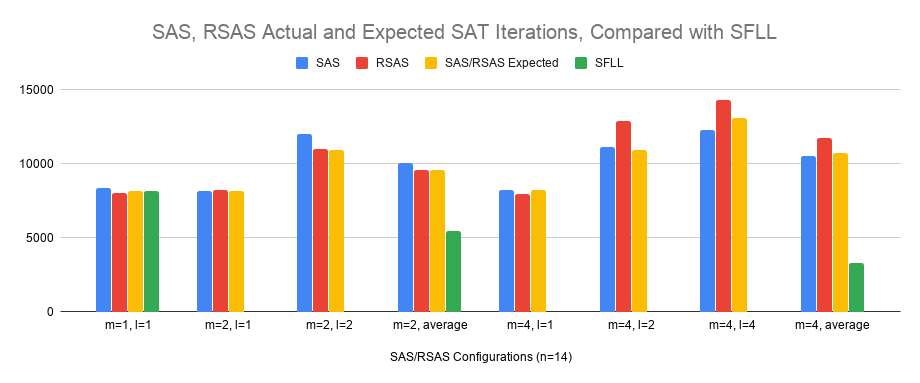}
    \caption{Actual and expected number of SAT iterations of SAS and RSAS, compared with SFLL.}
    \label{fig:rsas_iters}
\end{figure}

Fig.~\ref{fig:sas_vs_sfll_iter} compares the actual SAT iterations of SAS and SFLL. 
In Fig.~\ref{fig:iters_vary_n}, it can be observed that SAS's SAT complexity is higher than that of SFLL by a roughly constant factor when $m$ is fixed at 4. Note that the same set of four critical minterms are used for each locking scheme. Among various SAS configurations, a larger $l$ comes with higher SAT resilience as expected. 
In Fig.~\ref{fig:iters_vary_m}, we vary the critical minterm count ($m$) from 4 to 32 and demonstrate its impact on the SAT resilience of SAS and SFLL. While SAS configurations become stronger with more critical minterms, SFLL becomes weaker. Therefore, SAS is more SAT resilient and gives designers more flexibility when more critical minterms are needed.
\begin{figure}[htb]
    \centering
    \subfloat[Varying key length ($n$), fixing \# critical minterms $m=4$]{\includegraphics[width=0.59\textwidth]{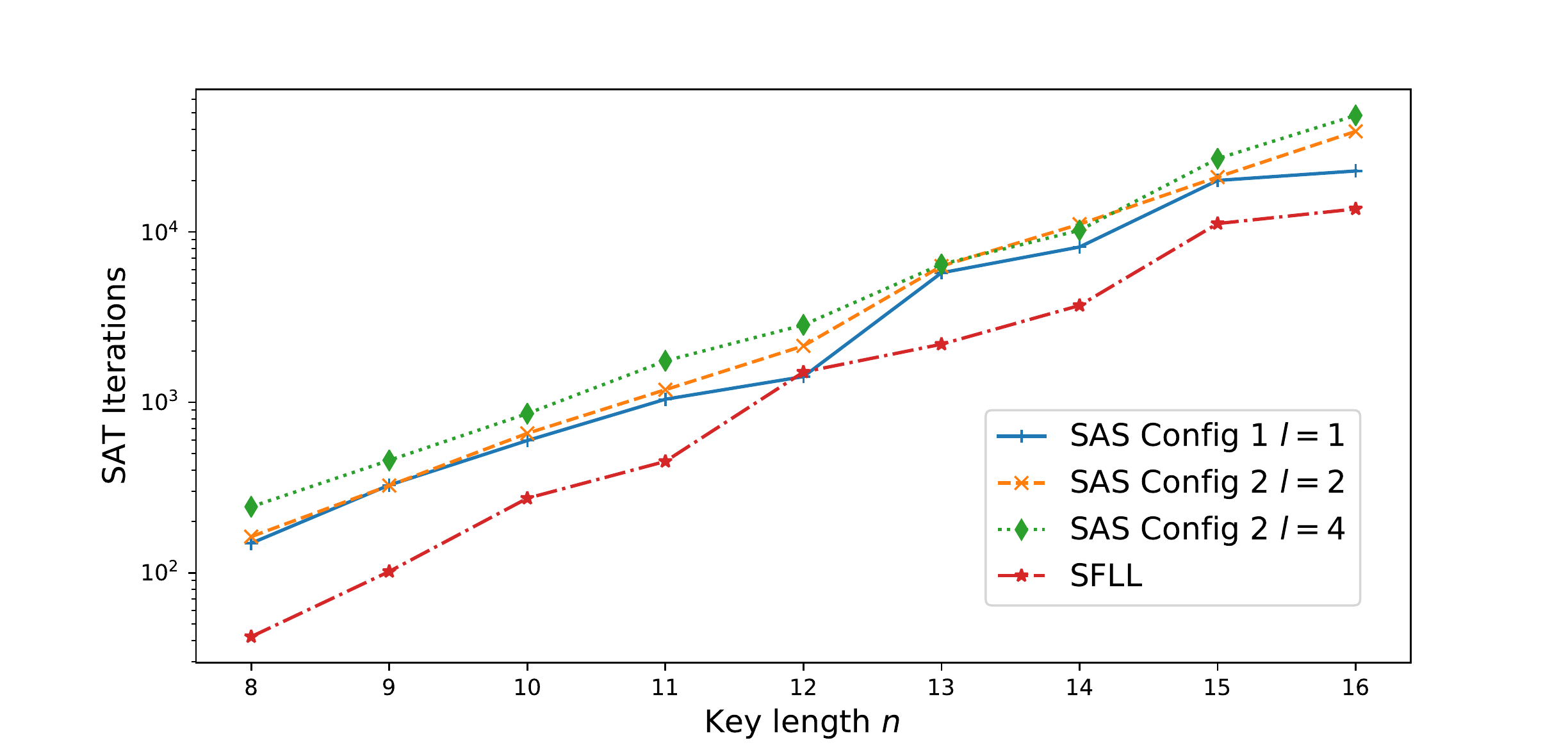}\label{fig:iters_vary_n}}
    \subfloat[Varying \# critical minterms ($m$), fixing key length $n=16$]{\includegraphics[width=0.41\textwidth]{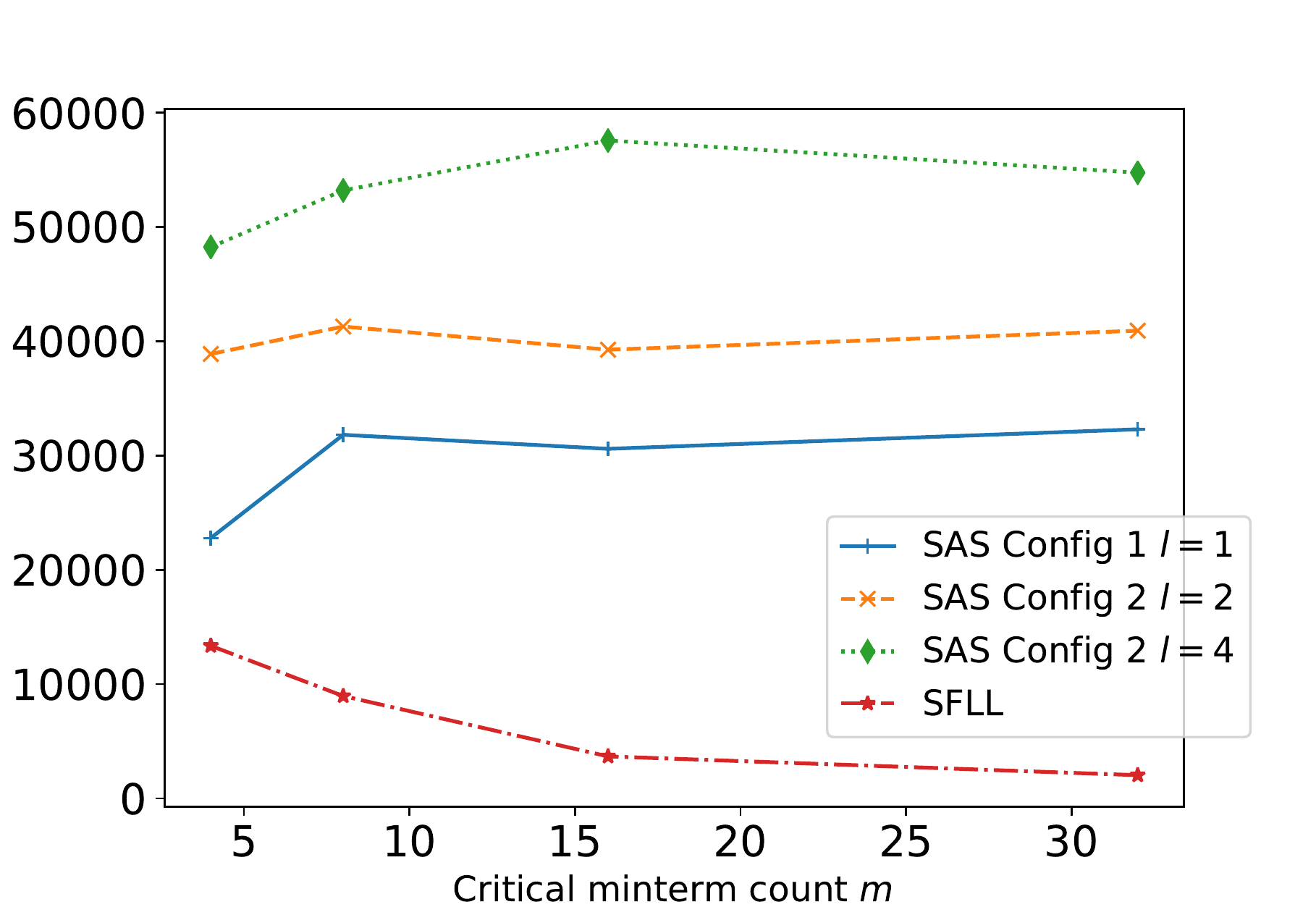}\label{fig:iters_vary_m}}
    \caption{The observed SAT iterations of SAS and SFLL by varying key length and critical minterm count.}
    \label{fig:sas_vs_sfll_iter}
\end{figure}

\subsection{Effectiveness}
We evaluate the effectiveness of SAS/RSAS and SFLL at the application level using PARSEC \cite{bienia2008parsec} and ML benchmarks as listed in Table \ref{tab:dataset_network}. Due to the functional equivalence of SAS and RSAS, they will have the same architecture-level effects and we use the same functional model to perform architecture-level simulation of SAS and RSAS. In our experiments, various numbers of critical minterms are locked. The same set of critical minterms are used for SAS/RSAS and SFLL in each experiment. 
The critical minterms are chosen according to the methods described in Sec.~\ref{sec:critical_minterms}. For SAS, we choose $l=1$ when $m=1$ and $l=2$ when $m\ge 2$. Figs.~\ref{fig:app_eff_parsec} and \ref{fig:app_eff_dnn} show that both SAS/RSAS and SFLL are effective at the application level for both generic and ML-based applications.
SAS/RSAS achieves high application-level effectiveness and exponential SAT resiliency at the same time. 
Considering that SAS/RSAS's SAT resilience is not compromised with the increase in $m$ as opposed to SFLL (as shown in Figs. \ref{fig:rsas_iters} and \ref{fig:iters_vary_m}), SAS/RSAS is a significant improvement over SFLL.

\begin{figure}[htb]
    \centering
    \subfloat[SAS/RSAS on PARSEC]{\includegraphics[width=0.4\textwidth]{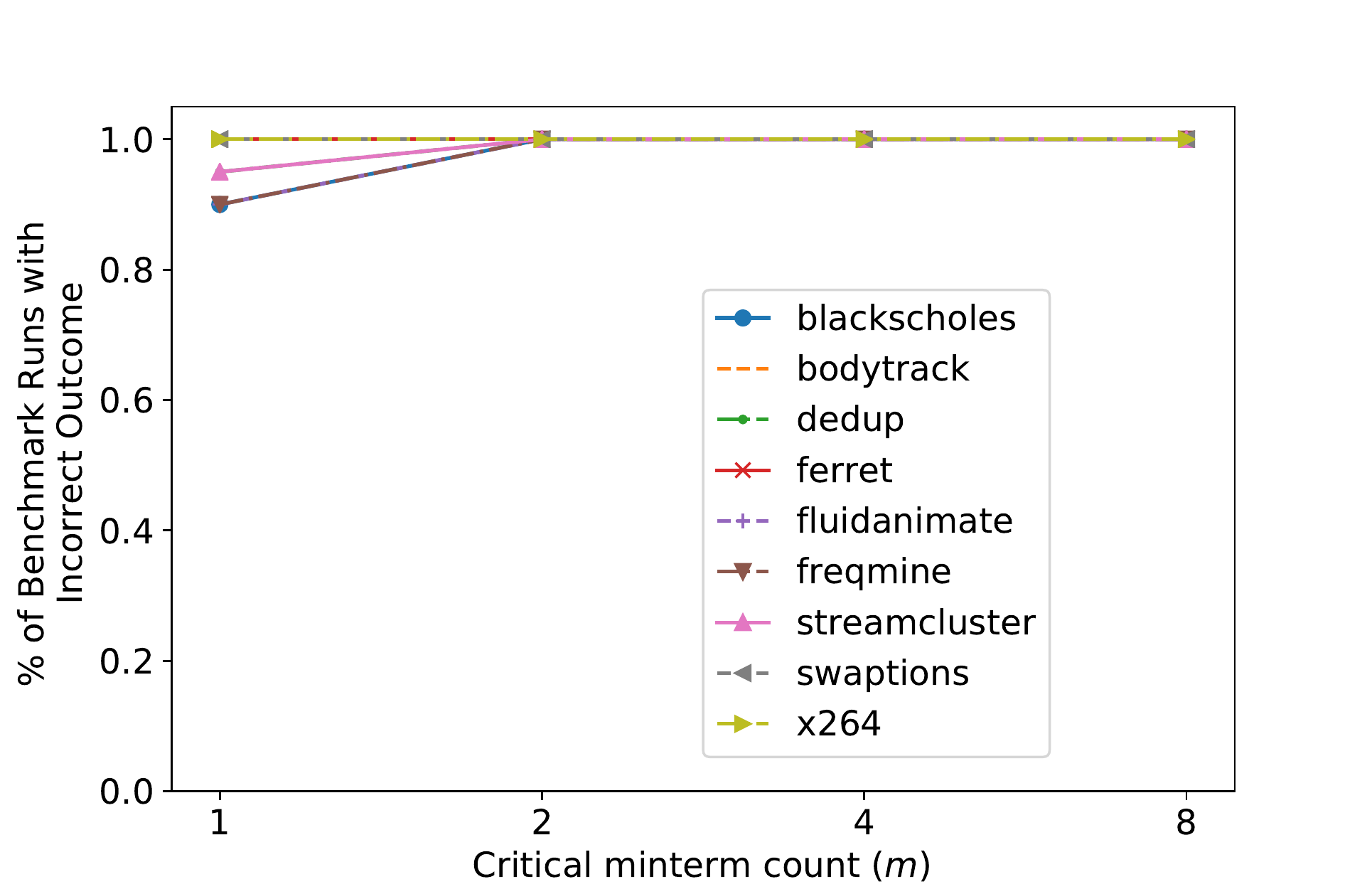}}
    \subfloat[SFLL on PARSEC]{\includegraphics[width=0.4\textwidth]{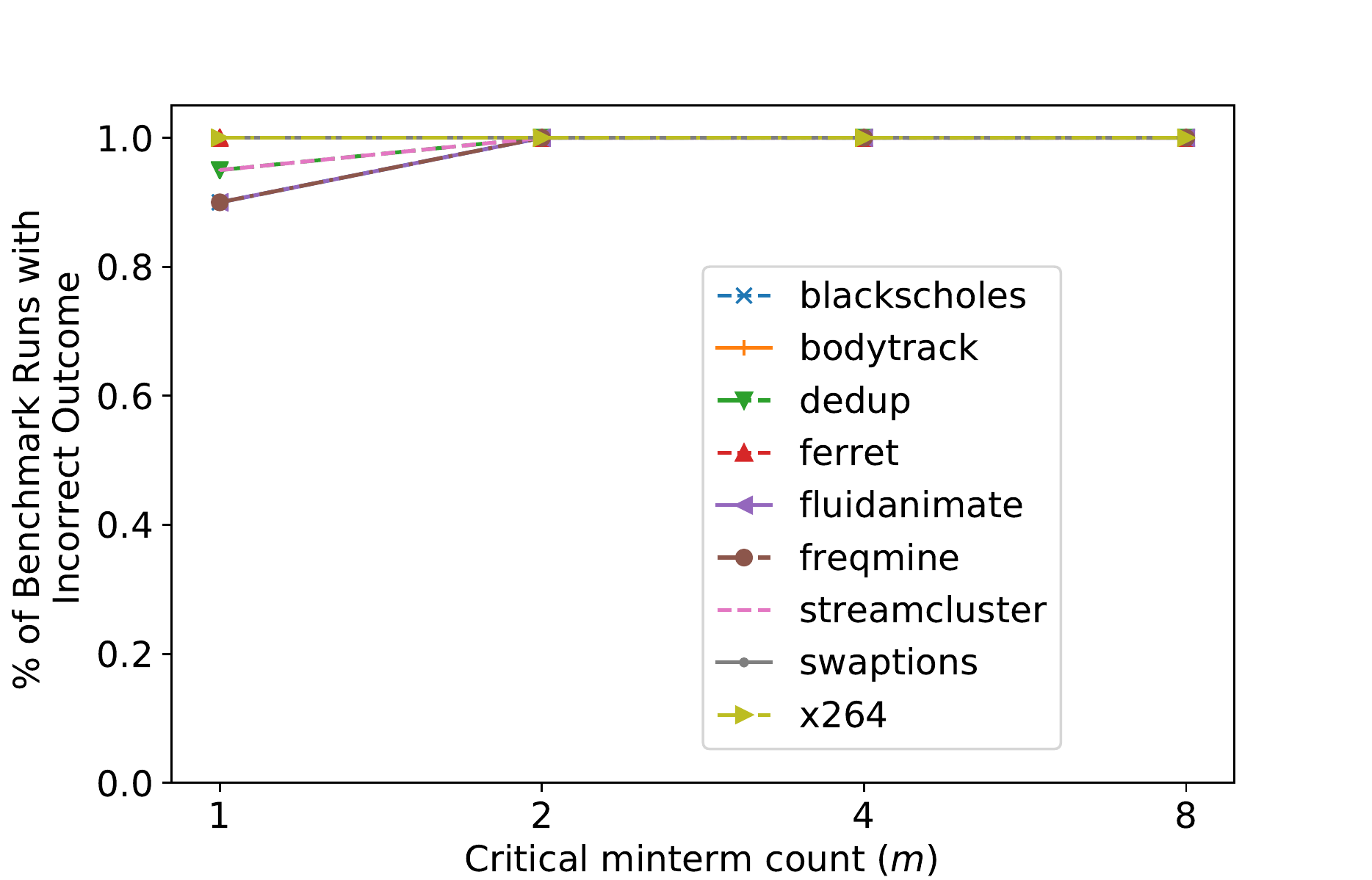}}
    \caption{The application-level effectiveness of SAS/RSAS and SFLL on PARSEC and ML benchmarks}
    \label{fig:app_eff_parsec}
\end{figure}

\begin{figure}[htb]    
    \subfloat[SAS/RSAS on ML]{\includegraphics[width=0.4\textwidth]{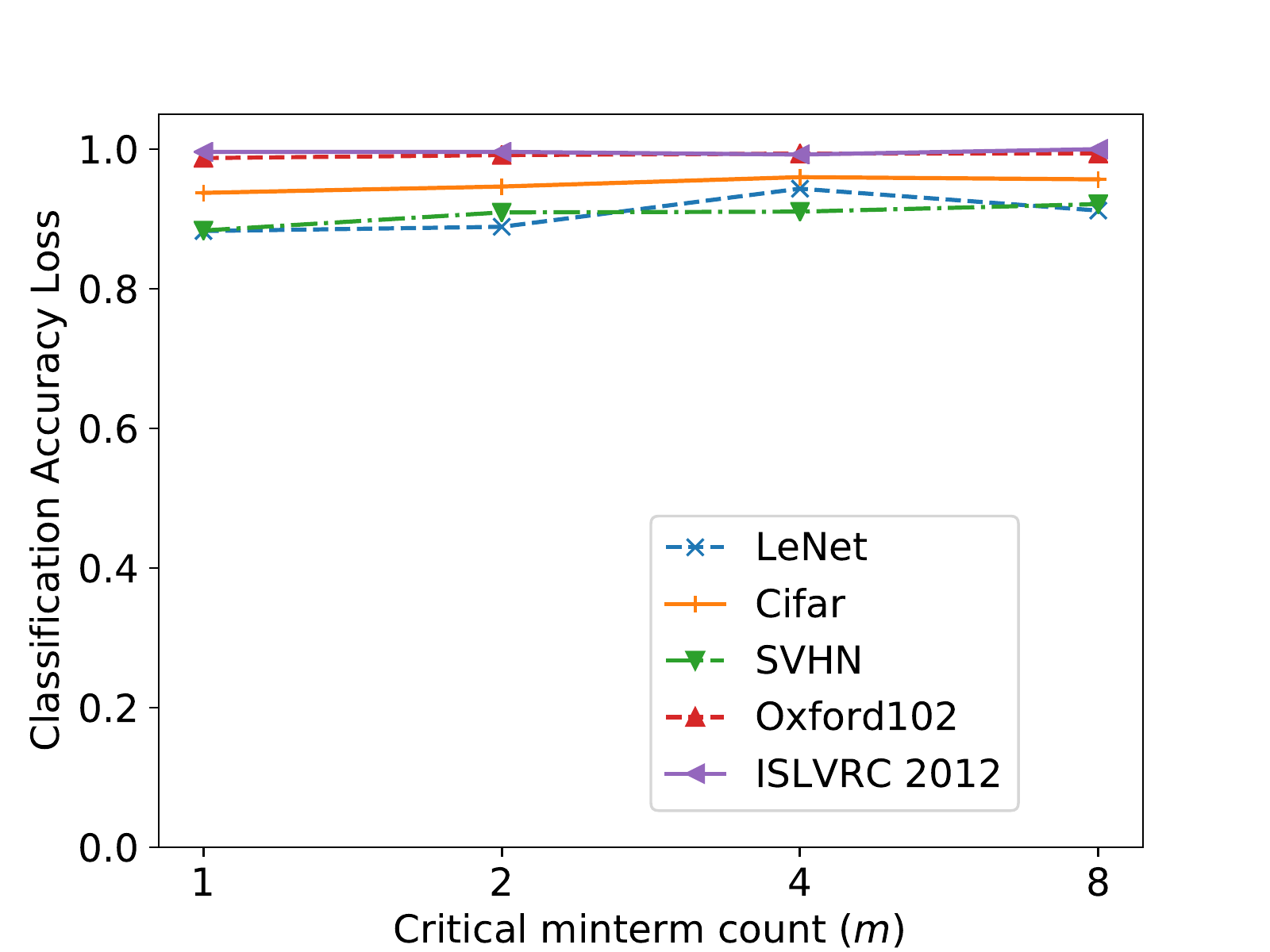}}
    \subfloat[SFLL on ML]{\includegraphics[width=0.4\textwidth]{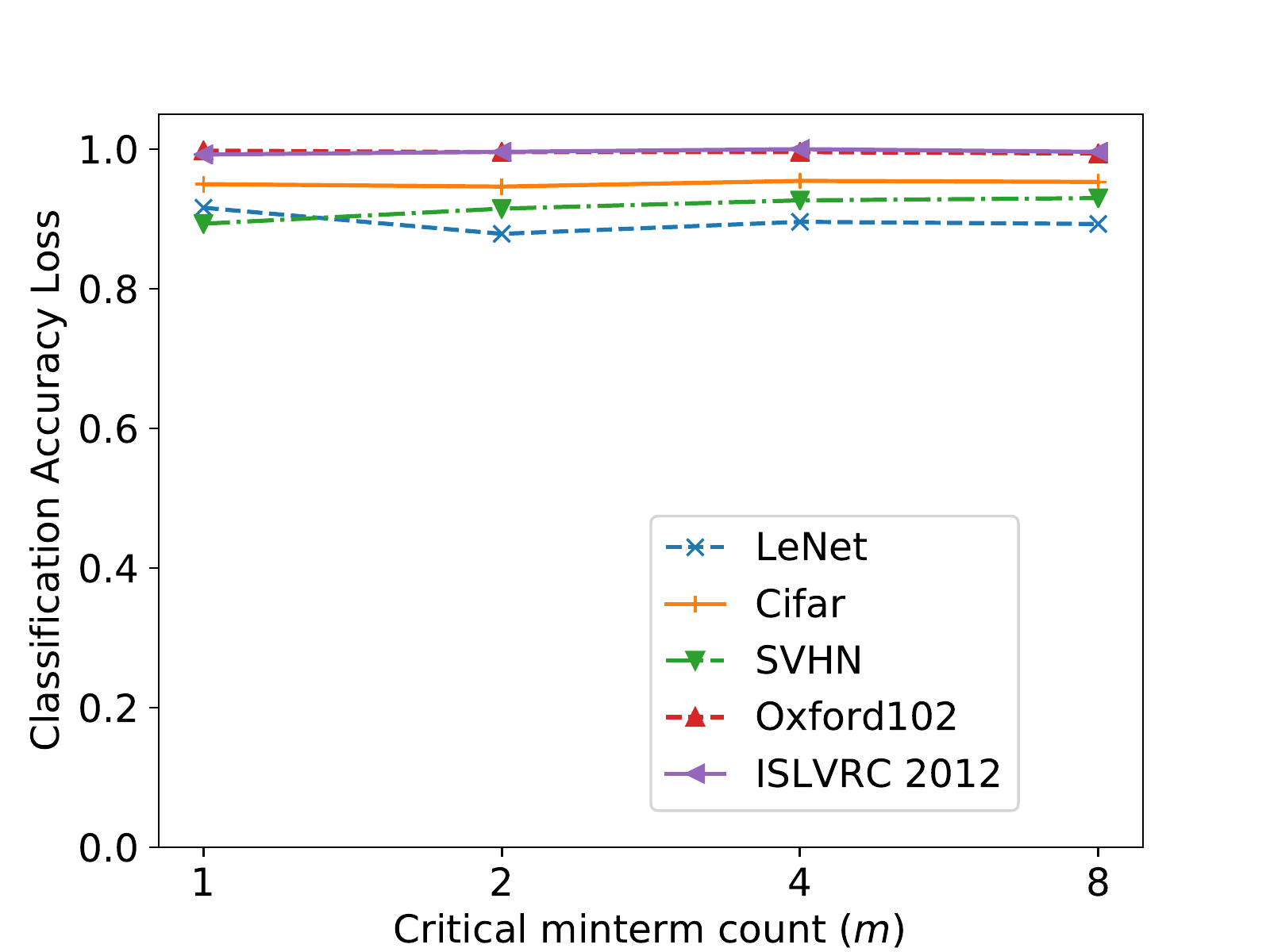}}
    \caption{The application-level effectiveness of SAS/RSAS and SFLL on PARSEC and ML benchmarks}
    \label{fig:app_eff_dnn}
\end{figure}

\subsection{Area, Power, and Delay Overhead of SAS, RSAS, and SFLL}
Now that we have demonstrated the SAT resilience of SAS and RSAS and their application-level effectiveness, we evaluate their area, power, and delay overhead. The overhead is also compared with SFLL. In our evaluation, we use 32 bits from the primary input for locking ($n=32$) and lock up to 4 critical minterms ($m=1,2,4$). We synthesize the original and locked circuits using Cadence RTL Compiler using SAED 90nm process. 
{In order to account for the area of tamper-proof memory that store keys of SAS and look-up tables of SFLL, we add 2.80$\mu m^2$ for each memory bit to the area of locked designs. This is scaled from \cite{yasin2017provably,sengupta2020truly} which used a 65nm library and reported $1.46\mu m^2$ per bit of tamper-proof memory for the square of the feature size, \ie $1.46\mu m^2 \times (\frac{90nm}{65nm})^2 = 2.80\mu m^2$.} Figs. \ref{fig:area_overhead}, \ref{fig:power_overhead}, and \ref{fig:delay_overhead} show the area, power, and delay overhead values, respectively. 
Compared with SFLL, on average, SAS and RSAS have 1.90\% and 0.73\% more area overhead, 0.43\% more and 0.04\% less power overhead, 0.93\% and 0.71\% more delay overhead, respectively. 
These are not significant increases in overhead and should be worth the gain in SAT resilience.
\begin{figure}[htb]
    \centering
    \includegraphics[width=0.8\textwidth]{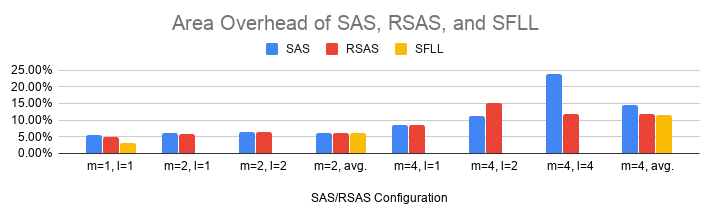}
    \caption{Area overhead of SAS and RSAS compared with SFLL}
    \label{fig:area_overhead}
\end{figure}
\begin{figure}[htb]
    \centering
    \includegraphics[width=0.8\textwidth]{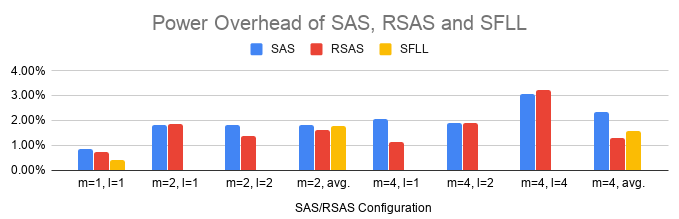}
    \caption{Power overhead of SAS and RSAS compared with SFLL}
    \label{fig:power_overhead}
\end{figure}
\begin{figure}[htb]
    \centering
    \includegraphics[width=0.8\textwidth]{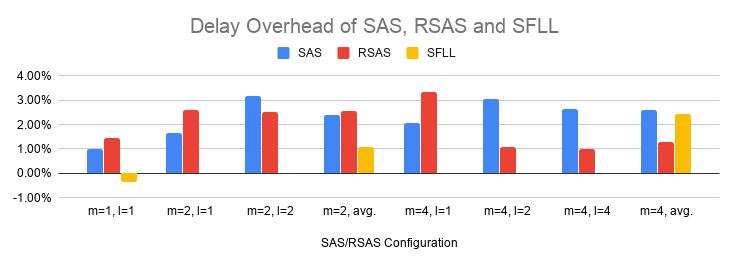}
    \caption{Delay overhead of SAS and RSAS compared with SFLL}
    \label{fig:delay_overhead}
\end{figure}

\section{Conclusion}\label{Sec_Conclusion}
In this work, we investigate logic locking techniques to secure both generic and error-resilient workloads running on locked processors. 
We motivate our work by demonstrating the insufficiency of the state-of-the-art logic locking scheme in securing such applications. We point out that this is due to the fundamental trade-off between {\it SAT resilience} (SAT attack complexity) and {\it effectiveness} (error rate of wrong keys) of logic locking. We formally prove this trade-off.
In order to address this dilemma, we propose Strong Anti-SAT (SAS) where a set of critical minterms are assigned higher corruptibility in order to ensure high application-level impact. Based on SAS, we also propose Robust SAS (RSAS) to thwart removal attacks on logic locking. RSAS is functionally equivalent to SAS and has the same SAT resilience and effectiveness.
Experimental results show that SAS and RSAS secure processors against SAT attack by ensuring exponential SAT attack complexity and high application-level impact simultaneously given any wrong key.
We also evaluate the area, power, and delay overhead of SAS and RSAS and compare it with SFLL. It is shown that SAS and RSAS have modest increase in overhead.
In summary, RSAS exhibits a higher SAT resilience than SFLL when multiple critical minterms are secured, while also maintaining equivalent effectiveness and removal attack resilience. Therefore, RSAS constitutes a significant improvment over SFLL-based locking.

%

\begin{acks}
This work is supported by AFOSR MURI under Grant FA9550-14-1-0351 and Northrop Grumman Corporation and University of Maryland Seedling Grant.
\end{acks}

%
\bibliographystyle{ACM-Reference-Format}
\bibliography{ref}

%
\appendix

\section{Proof of Theorem~\ref{thm:SAT}}\label{proof:thm_sat}
\begin{proof}
This can be proved by contradiction: suppose the key returned by the last step of SAT attack is a wrong key.
This implies that there must exist an input minterm $\vec{X}$ such that
\begin{equation*}
C(\vec{X},\vec{K}_c,\vec{Y}_c)\land C(\vec{X},\vec{K},\vec{Y}) \land (\vec{Y}_c\ne \vec{Y})
\end{equation*}
where $\vec{K}$ is the actual key, $\vec{Y}_c$ is the output with returned key $\vec{K}_c$ and $\vec{Y}$ is the correct output according to the actual key $\vec{K}$.
$\vec{X}$ cannot be a previously found DI because otherwise $\vec{K}_c$ will not satisfy \eqref{SAT_result}.
We can see that $\vec{X}$ qualifies for a DI: just assign $\vec{K}_\alpha = \vec{K}_c$ and $\vec{K}_\beta = \vec{K}$.
This means that~\eqref{SAT_formula} is still satisfiable and contradicts the criteria that no more DI can be found before the SAT attack goes to the final step.

Hence proved.
\end{proof}

\section{Proof of Theorem~\ref{thm:equivalence}}\label{proof:thm_equi}
\begin{proof}
Recall that
\begin{equation*}
    e_w = \frac{1}{|\KKK^W|}\sum_{\forall\vec{K}\in\KKK^W}e_{\vec{K}} = \frac{1}{|\KKK^W|}\sum_{\forall\vec{K}\in\KKK^W} \frac{|\XXX_{\vec{K}}|}{2^n} = \frac{1}{2^n|\KKK^W|}\sum_{\forall\vec{K}\in\KKK^W} |\XXX_{\vec{K}}|
\end{equation*}
and
\begin{equation*}
    \gamma = \frac{1}{2^n}\sum_{\forall\vec{X}\in\{0,1\}^n}\gamma_{\vec{X}} = \frac{1}{2^n}\sum_{\forall\vec{X}\in\{0,1\}^n} \frac{|\KKK_{\vec{X}}|}{|\KKK^W|} = \frac{1}{2^n|\KKK^W|}\sum_{\forall\vec{X}\in\{0,1\}^n} |\KKK_{\vec{X}}|
\end{equation*}
Therefore, in order to prove $e_w=\gamma$, we only need to prove
\begin{equation}
 \sum_{\forall\vec{K}\in\KKK^W} |\XXX_{\vec{K}}| = \sum_{\forall\vec{X}\in\{0,1\}^n} |\KKK_{\vec{X}}|  
 \label{eq:equalityproof}
\end{equation}
Let us consider the following bipartite graph $G=(\XXX, \KKK^W, \EEE)$ where $\XXX$ is $\{0,1\}^n$ which is the set of all the possible input minterms, $\KKK^W$ is the set of wrong keys, and the set of edges $\EEE=\{(\vec{X},\vec{K})|\vec{X}\in\XXX\text{ and }\vec{K}\in\KKK^W\text{, }\vec{K}\text{ corrupts }\vec{X}\}$.
Both sides of Eq.~\ref{eq:equalityproof} denote the total number of edges in $\EEE$ and hence must be equal.
\end{proof}

\color{black}

\section{Proof of Lemma~\ref{lemma:removal}}\label{proof:lemma_removal}
\begin{proof}
Recall that Equation~\eqref{SAT_formula} gives the SAT formula for each SAT iteration:
        \begin{equation*}
        \begin{aligned}
        C(\vec{X}_i,\vec{K}_\alpha,\vec{Y}_\alpha)\land C(\vec{X}_i,\vec{K}_\beta,\vec{Y}_\beta)\land (\vec{Y}_\alpha \ne \vec{Y}_\beta) \\
        \bigwedge_{j=1}^{i-1}(C(\vec{X}_j,\vec{K}_\alpha,\vec{Y}_j)\land C(\vec{X}_j,\vec{K}_\beta,\vec{Y}_j))
        \end{aligned}
        \end{equation*}
To satisfy the first line, at least one of $\vec{K}_\alpha$ and $\vec{K}_\beta$ must be a wrong key that corrupts $\vec{X}$.
However, if any wrong key that corrupts $\vec{X}_i$ also corrupts at least 1 previously found DI, this wrong key cannot satisfy the second line.
Therefore, such $\vec{X}_i$ cannot be the DI in future iterations.
\end{proof}

\section{Proof of Lemma~\ref{lemma:critical_count}}\label{proof:lemma_critical_count}
\begin{proof}
Recall that $g$ has on-set size 1.
Let $\vec{X}_g$ be the very input that makes $g(\vec{X}_g)=1$.
$\forall \vec{X}\in \MMM$, let $\vec{K_1}=\vec{X}\oplus\vec{X}_g$.
Then, any $\vec{K}=(\vec{K_1},\vec{K_2})\in \KKK^W$ is a wrong key that only corrupts $\vec{X}$.
Therefore, $\vec{X}$ has to be chosen as a DI to prune out this wrong key.
\end{proof}

\section{Proof of Lemma~\ref{lemma:SAS_critical_count}}
\label{proof:lemma_sas_crit}
\begin{proof}
This is a natural extension to Lemma~\ref{lemma:critical_count}.
Let $\vec{X}$ be a critical minterm and $\vec{X}\in\MMM^j$.
Recall that $g$ has on-set size 1.
Let $\vec{X}_g$ be the very input that makes $g(\vec{X}_g)=1$.
$\forall \vec{X}\in \MMM^j$, let $\vec{k}=\vec{X}\oplus\vec{X}_g$.
Then, let us consider the following wrong key $\vec{K}=(\vec{K}^1,\vec{K}^2,\ldots,\vec{K}^l)\in \KKK^W$ which is composed as follows:
$\vec{K}^{j}=(\vec{k},\vec{K}^j_2)\in\KKK^W_j$ where $\KKK^W_j$ is the set of wrong keys for the $j^\text{th}$ SAS block.
For any $i=1,2,\ldots,l$ that $i\ne j$, $\vec{K}^i \in \KKK^C_i$ where $\KKK^C_i$ is the set of correct keys for the $i^\text{th}$ SAS block.
Such a key $\vec{K}$ is a wrong key that only corrupts $\vec{X}$.
Therefore, $\vec{X}$ has to be chosen as a DI to prune out this wrong key.
\end{proof}
\end{document}